\theoremstyle{plain}
\newtheorem{defin}{Definition}
\newtheorem{theo}[defin]{Theorem}
 \newenvironment{theorem}{\begin{theo} \sl}{\end{theo}}
\newtheorem{lem}[defin]{Lemma}
 \newenvironment{lemma}{\begin{lem} \sl}{\end{lem}}
\newtheorem{coro}[defin]{Corollary}
\newtheorem{prop}[defin]{Proposition}
\newcommand{\Vis}{\mathord{\it Vis}}
\newcommand{\etal}{\emph{et~al.}\xspace}
\title{Routing on the Visibility Graph
\footnote{
An extended abstract of this paper appeared in the proceedings of the 28th International Symposium on Algorithms and Computation (ISAAC 2017)~\cite{BKRV2017RoutingVisibilityGraph}. 
P.~B. is supported in part by NSERC. 
M.~K.~partially supported by MEXT KAKENHI No.~17K12635 and the NSF award CCF-1422311.
M.~K. and A.~v.~R. was supported by JST ERATO Grant Number JPMJER1201, Japan. 
S.~V.~is supported in part by NSERC and the Carleton-Fields Postdoctoral Award.}}
\author{Prosenjit Bose\thanks{Carleton University, Ottawa, Canada. {\tt \{jit,sander\}@scs.carleton.ca}}
  \and Matias Korman\thanks{Tufts University, Medford, USA. {\tt matias.korman@tufts.edu}} 
  \and Andr\'e van Renssen\thanks{The University of Sydney, Sydney, Australia. {\tt andre.vanrenssen@sydney.edu.au}}  
  \and Sander Verdonschot$^\ddagger$}
\date{}
\begin{document}

\maketitle

\begin{abstract}
  We consider the problem of routing on a network in the presence of line segment constraints (i.e., obstacles that edges in our network are not allowed to cross). Let $P$ be a set of $n$ points in the plane and let $S$ be a set of non-crossing line segments whose endpoints are in $P$. We present two deterministic 1-local $O(1)$-memory routing algorithms (i.e., the algorithms never look beyond the direct neighbours of the current location and store only a constant amount of additional information). These algorithms are guaranteed to find a path consisting of at most a linear number of edges between any pair of vertices of the \emph{visibility graph} of $P$ subject to a set of constraints $S$. Contrary to {\em all} existing deterministic local routing algorithms, our routing algorithms do not route on a plane subgraph of the visibility graph. Additionally, we provide lower bounds on the routing ratio of any  deterministic local routing algorithm on the visibility graph. 
\end{abstract}

\section{Introduction}
Routing is the process of sending a message through a network from a source vertex to a destination vertex. It is a fundamental problem in networks. If the routing algorithm has complete knowledge of the network, it is known how to send a message from any source vertex to any destination vertex (among others,  Dijkstra's algorithm can compute a shortest path between two vertices in a network).

However, it is not always possible for the routing algorithm to have full knowledge of the network. If the network is very large it may be too expensive to store it explicitly. And even if storage constraints are not an issue, it may be hard to keep the representation up-to-date if the network changes frequently.

Thus, there is a need for routing algorithms that guarantee message delivery and use as little information of the network as possible. For example, it is challenging to successfully route when the only information available to the routing algorithm is the location of the current vertex, the neighbours of the current vertex and a constant amount of additional information, such as the original source vertex of the message and the destination vertex. A routing algorithm that can work under these constraints is often referred to as {\em local} (or $k$-local for some constant $k$, when the $k$-neighbourhood\footnote{The $k$-neighbourhood is the set of vertices reachable in at most $k$ steps from the current vertex.} is available). In our setting, we assume that the network is a graph embedded in the plane, with edges as straight line segments connecting pairs of vertices, weighted by the Euclidean distance between their endpoints. We refer to such networks as {\em geometric networks}. Algorithms routing on such networks are referred to as \emph{geometric} routing algorithms \mbox{(see \cite{G09} and \cite{R09}} for surveys of the area). 

Since local routing algorithms have little information beyond the immediate neighborhood of the current vertex, they use the structure of the network to guide their navigation. For example, intuitively speaking, the $\Theta_m$-graph is a geometric network where each point connects to its nearest point in $m$ different cones, which can be thought of as directions (a formal definition is given in Section \ref{sec:prelim}). Thus, to route on this graph, it has been shown \cite{KG92} that it suffices to send the message to the nearest node in the direction of the desired destination. If $m$ is seven or larger, this strategy will ensure that the message always reaches its destination and its length is not much more than the Euclidean distance between the source and destination (see \cite{BCMRS16} for a survey on $\Theta_m$-graphs). The argument can be extended to show that when $m$ is six, the strategy always reaches the destination, but this no longer gives a guarantee on the length of the path. 

Using the structure of a graph to guide a routing strategy becomes more challenging in the presence of constraints, since constraints act as barriers and can disrupt the inherent structure that may be present in graphs built without constraints. For example a vertex in the $\Theta_m$-graph may no longer have an edge in a given direction because of existing constraints. We model this more general setting in which some connections are forbidden by using a set $S$ of non-intersecting {\em line segment constraints} whose endpoints are vertices of the network. These segments act as constraints in the sense that no edge can properly intersect an edge of $S$.

Given a set $P$ of $n$ points in the plane and a set $S$ of non-intersecting line segment constraints, we say that two vertices $u$ and $v$ can \textit{see each other} (or are {\em visible}) when either the line segment $uv$ does not properly intersect any constraint in $S$ or $uv$ is itself a constraint in $S$. If two vertices $u$ and $v$ can see each other, the line segment $uv$ is referred to as a \emph{visibility edge}. The \emph{visibility graph} of $P$ with respect to a set of constraints $S$, denoted $\Vis(P,S)$, has $P$ as vertex set and all visibility edges as edge set. 

This setting has been studied extensively in the context of motion planning amid obstacles. Clarkson \cite{C87} was one of the first to study this problem. In his work, he showed how to find an approximate shortest path between two points in the plane amid a set of obstacles.  In order to find this approximate shortest path efficiently, Clarkson constructs a $(1+\epsilon)$-spanner of $\Vis(P,S)$ with a linear number of edges. A subgraph $H$ of $G$ is called a $t$-spanner of $G$ (for $t\geq 1$) if for each pair of vertices $u$ and $v$, the shortest path in $H$ between $u$ and $v$ has length at most $t$ times the shortest path between $u$ and $v$ in $G$. The smallest value $t$ for which $H$ is a $t$-spanner is the \emph{spanning ratio} or \emph{stretch factor} of $H$. Following Clarkson's result, Das \cite{D97} showed how to construct a spanner of $\Vis(P,S)$ with constant spanning ratio and constant degree. Bose and Keil \cite{BK06} showed that the Constrained Delaunay Triangulation is a 2.42-spanner of $\Vis(P,S)$. Recently, the constrained half-$\Theta_6$-graph (which is identical to the constrained Delaunay graph whose empty visible region is an equilateral triangle) was shown to be a plane 2-spanner of $\Vis(P,S)$~\cite{BFRV12Constrained} and all constrained $\Theta$-graphs with at least 6 cones were shown to be spanners as well~\cite{BR14}. 

To the best of our knowledge, all deterministic routing algorithms known to date that guarantee that each message eventually reaches its intended destination in geometric networks compute some plane subgraph of the complete Euclidean graph and somehow route on the subgraph. This means that of the potentially quadratic number of edges available to the routing algorithm, only a linear number are ever considered. This artificial constraint limits the number of options available and thus can create paths that are much longer than necessary, for example when the destination vertex is visible from the source vertex. The visibility graph $\Vis(P,S)$ depicts all connections that are not blocked by the set $S$ of constraints. In other words, it has all connections that can be used (and this graph need not be plane). In this paper we present a strategy to route on the visibility graph, making it the first deterministic local routing algorithm that does not restrict its choices to a plane subgraph of $\Vis(P,S)$.

\subsection{Results and previous work}
Although motion planning amid obstacles has been studied extensively~\cite{C87, D97, BFRV12Constrained, BR14}, there has not been much work on routing in the same setting. Bose~\etal~\cite{BFRV2017RoutingJournal} showed that it is possible to route locally and 2-competitively between any two visible vertices in the constrained $\Theta_6$-graph (the constrained $\Theta_m$-graph with 6 cones). A routing strategy is called $c$-competitive when the length of the path that the routing strategy follows is at most $c$ times the length of the shortest path between the source and destination in the graph. Additionally, an 18-competitive routing algorithm between any two visible vertices in the constrained half-$\Theta_6$-graph (which is equivalent to the constrained Delaunay graph that uses an empty equilateral triangle) was provided~\cite{BFRV2017RoutingJournal}, but this strategy does not work when the source and destination do not see each other. In the same paper it was shown that no deterministic local routing algorithm is $o(\sqrt{n})$-competitive between all pairs of vertices of the constrained $\Theta_6$-graph, regardless of the amount of memory it is allowed to use. Recently, the authors presented a non-competitive 1-local $O(1)$-memory routing algorithm to route on the visibility graph~\cite{BKRV2017Routing} (a formal definition of such an algorithm can be found in Section~\ref{sec:prelim}). However, this method also restricts the edge choices to a plane subgraph of $\Vis(P,S)$, as it locally determines the edges of the so-called constrained half-$\Theta_6$-graph and routes on it.

We present two deterministic 1-local $O(1)$-memory routing algorithms on $\Vis(P,S)$. The first algorithm locally computes a non-plane subgraph of the visibility graph (the constrained $\Theta_6$-graph) and routes on it. We then modify this algorithm to obtain a routing algorithm that routes directly on the visibility graph (i.e., any edge of $\Vis(P,S)$ may be used). Both of these algorithms reach the destination in at most $O(n)$ steps. To the best of our knowledge, this is the first local routing algorithm that does not compute a plane subgraph of the visibility graph. 

We also provide some lower bounds to the problem. Specifically, we show that no deterministic local routing algorithm can be $o(n)$-competitive if we measure the quality of the path by the number of steps taken between every pair of vertices in the visibility graph. Our routing strategy creates paths of linear length, hence they are the best we can hope for in this regard. 

Alternatively, if we measure the quality of the path with respect to the Euclidean length of the path, we show that no algorithm can be $o(\sqrt{n})$-competitive. Under specific conditions we can also show that no algorithm can be $o(n)$-competitive. This second bound only holds if the algorithm considers only the subgraph induced by the endpoints of edges crossing the line segment between the source and destination (a technique commonly used in the unconstrained setting). 

\section{Preliminaries}\label{sec:prelim}
The $\Theta_m$-graph plays an important role in our routing strategy. We begin by defining it. Define a \emph{cone} $C$ to be the region in the plane between two rays originating from a vertex referred to as the \emph{apex} of the cone. When constructing a (constrained) $\Theta_m$-graph, for each vertex $u$ consider the rays originating from $u$ with the angle between consecutive rays being $2 \pi / m$. Each pair of consecutive rays defines a cone. The cones are oriented such that the bisector of some cone coincides with the vertical ray emanating from $u$ that lies above $u$. Let this cone be $C_0$ of $u$ and number the cones in clockwise order around $u$ (see Figure~\ref{fig:Cones}). The cones around the other vertices have the same orientation as the ones around $u$. We write $C_i^u$ to indicate the $i$-th cone of a vertex $u$, or $C_i$ if $u$ is clear from the context. For ease of exposition, we only consider point sets in general position: no two points lie on a line parallel to one of the rays that define the cones, no two points lie on a line perpendicular to the bisector of a cone, and no three points are collinear. The main implication of this assumption is that no point lies on a cone boundary. These assumptions can be removed via classic symbolic perturbation techniques.

\begin{figure}[ht]
  \begin{minipage}[t]{0.45\linewidth}
    \begin{center}
      \includegraphics{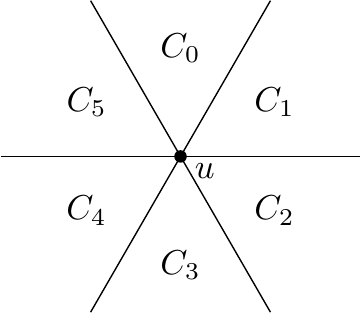}
    \end{center}
    \caption{The cones with apex $u$ in the $\Theta_6$-graph. All points of $S$ have exactly six cones.}
    \label{fig:Cones}
  \end{minipage}
  \hspace{0.05\linewidth}
  \begin{minipage}[t]{0.45\linewidth}
    \begin{center}
      \includegraphics{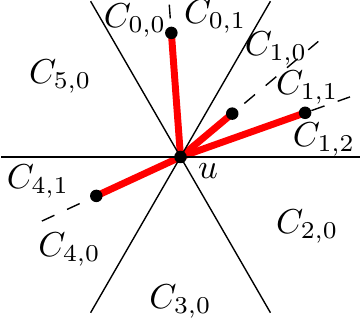}
    \end{center}
    \caption{The subcones with apex $u$ in the constrained $\Theta_6$-graph (constraints denoted as red thick segments).}     \label{fig:ConstrainedCones}
  \end{minipage}
\end{figure}

Let vertex $u$ be an endpoint of a constraint $c$ (if any) and let $v$ be the other endpoint and let cone $C_i^u$ be the cone that contains $v$. The lines through all constraints $c$ in $C_i^u$ with $u$ as an endpoint split $C_i^u$ into several \emph{subcones} (see Figure~\ref{fig:ConstrainedCones}). We use $C_{i, j}^u$ to denote the $j$-th subcone of $C_i^u$ (again, numbered in clockwise order). When a constraint $c = (u, v)$ splits a cone of $u$ into two subcones, we define $v$ to lie in both of these subcones. We consider a cone that is not split to be a single subcone.

We now introduce the {\em constrained} $\Theta_m$-graph: for each subcone $C_{i, j}$ of each vertex $u$, add an edge from $u$ to the closest vertex in that subcone that can see $u$, where distance is measured along the bisector of the original cone (\emph{not the subcone}). More formally, we add an edge between two vertices $u$ and $v$ if $v$ can see $u$, $v \in C_{i, j}^u$, and for all points $w \in C_{i, j}^u$ that can see $u$, $|u v'| \leq |u w'|$, where $v'$ and $w'$ denote the projection of $v$ and $w$ on the bisector of $C_i^u$ and $|x y|$ denotes the length of the line segment between two points $x$ and $y$. Note that our general position assumption implies that each vertex adds at most one edge per subcone.

We now define our routing model. Formally, a routing algorithm $A$ is a deterministic $1$-local, $O(1)$-memory routing algorithm, if the choice of the vertex to which a message is forwarded from the current vertex $s$ is a function of $s$, $t$, $N(s)$, and $M$, where $t$ is the destination vertex, $N(s)$ is the set of vertices adjacent to $s$ and set of constraints incident to $s$ and $M$ is a memory of constant size, stored with the message. We consider a unit of memory to consist of a $\log_2 n$ bit integer or a point in $P$. Our model assumes that the only information stored at each vertex of the graph is $N(s)$.

\begin{lemma}{\emph{\cite{BFRV12Constrained}}}
  \label{lem:ConvexChain}
Let $u$, $v$, and $w$ be three arbitrary points in the plane such that $u w$ and $v w$ are visibility edges and $w$ is not the endpoint of a constraint intersecting the interior of triangle $u v w$ (see Figure~\ref{fig:VisiblePointInsideTriangle}). Then there exists a convex chain of visibility edges from $u$ to $v$ in triangle $u v w$, such that the polygon defined by $u w$, $w v$ and the convex chain is empty and does not contain any constraints.
\end{lemma}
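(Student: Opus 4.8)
The plan is to construct the convex chain by induction on the number of vertices of $P$ lying strictly inside triangle $uvw$. First I would handle the base case: if no vertex of $P$ lies in the interior of $uvw$, then since $w$ is not the endpoint of a constraint intersecting the interior of the triangle, and $uw$ and $vw$ are visibility edges, I claim $uv$ itself is a visibility edge and the ``convex chain'' is the single edge $uv$. Indeed, any constraint crossing the open segment $uv$ would have to have at least one endpoint inside the closed triangle (it cannot cross $uw$ or $wv$ since those are visibility edges, and it cannot pass through $w$ by hypothesis); an endpoint on the interior of the triangle is excluded by the empty-triangle assumption, and an endpoint on the boundary segments $uw$, $wv$ would force the constraint to cross one of those segments or coincide with part of it, again contradicting that they are visibility edges (here the non-crossing property of $S$ and general position are used). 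So $uv$ is a visibility edge and the degenerate chain works.

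For the inductive step, suppose there is at least one vertex of $P$ in the interior of $uvw$. Among all such vertices, let $x$ be the one closest to the line through $uv$ (or, equivalently, the one such that triangle $uvx$ contains no other vertex of $P$ in its interior — I would pick whichever formulation makes the geometry cleanest, using general position to ensure uniqueness). I would then argue that $ux$ and $vx$ are visibility edges and that $x$ is not the endpoint of a constraint meeting the interior of $uvx$: a constraint through an interior point of $uvx$ would, since $uvx$ contains no vertices of $P$ in its interior, have to have both endpoints outside $uvx$, hence cross its boundary twice; crossing $ux$ or $vx$ is what we want to rule out, and crossing $uv$ is impossible because that part of $uv$ lies in the original triangle and would then separate $x$ from... — this is the delicate point (see below). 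Granting this, apply the lemma inductively to $u$, $x$, $w$ (the triangle $uxw$ has fewer interior vertices) to get a convex chain from $u$ to $x$, and to $x$, $v$, $w$ to get a convex chain from $x$ to $v$. Concatenating them at $x$ gives a chain from $u$ to $v$; the region enclosed between this concatenated chain and the path $u\to w\to v$ is the disjoint union of the two enclosed empty regions plus triangle $uxw$-minus-chain and $xvw$-minus-chain pieces, so it is empty of points and constraints. Finally I would verify the concatenated chain is convex: at $x$ the chain turns the correct way because $x$ was chosen to be extremal (closest to line $uv$), so both sub-chains bend toward $w$ and meet convexly at $x$; within each sub-chain convexity is inherited from the induction.

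The main obstacle I expect is the careful verification that $ux$ and $vx$ are genuinely visibility edges and that the empty/constraint-free condition is inherited by the sub-triangles $uxw$ and $xvw$ — in particular ruling out a constraint that enters triangle $uvw$, passes ``behind'' $x$ on the far side from $uv$, and separates $x$ from $w$. The resolution should be that such a constraint would cross $uw$ or $vw$ (contradicting that they are visibility edges) or would have an endpoint inside $uvw$; if that endpoint is inside $uxw$ or $xvw$ we recurse on a smaller instance, and if the constraint genuinely isolates $x$ from $w$ one instead chooses $x$ to be a vertex visible from $w$ inside the triangle and reroutes the argument — essentially the standard ``radial sweep from $w$'' or ``peeling the convex hull of interior points toward $uv$'' argument. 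I would also need to double-check the turning direction at the join $x$ using the extremality of $x$ together with the fact that the two sub-chains lie on the $uv$-side of the segments $ux$ and $xv$ respectively; this is where general position (no three collinear points) is used to guarantee strict convexity at $x$.
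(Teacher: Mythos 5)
You are proving a lemma the paper only cites from \cite{BFRV12Constrained}, so your argument stands on its own; the base case is essentially fine (though you should also rule out constraints that enter the interior of the empty triangle without crossing $uv$, e.g.\ ones with an endpoint at $u$, by the same exit argument). The inductive step, however, contains a genuine error: you chose the wrong extremal vertex. The polygon bounded by $uw$, $wv$ and the chain must be empty, so the chain has to separate $w$ from every other vertex in the triangle --- it is the $w$-facing portion of the convex hull of the triangle's vertices --- and the interior vertex \emph{closest} to the line $uv$ is precisely the one farthest from where the chain must go. Concretely, take $u=(0,0)$, $v=(1,0)$, $w=(0.5,1)$, interior vertices $x=(0.5,0.1)$ and $p=(0.3,0.5)$, and no constraints at all. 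Your rule selects $x$; the recursion on $uxw$ (which contains $p$) yields the chain $u,p,x$ and the recursion on $xvw$ yields $x,v$; the concatenation $u,p,x,v$ makes a right turn at $p$ and a left turn at $x$, so it is not convex (the correct chain is $u,p,v$). Your claim that ``both sub-chains bend toward $w$ and meet convexly at $x$'' fails exactly because $x$ lies below both of its chain neighbours. A second failure of the same choice: applying the lemma to $(u,x,w)$ requires $xw$ to be a visibility edge (that, not visibility of $ux$ and $vx$, is the hypothesis for the sub-triangle), and a constraint such as the one from $(0.3,0.5)$ to $(0.7,0.5)$ blocks $xw$ while leaving $x$ the vertex closest to line $uv$ and all hypotheses of the lemma intact.

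Both problems disappear if you instead take $x$ to be the interior vertex \emph{farthest} from the line $uv$. Then $xw$ is a visibility edge: any constraint crossing $xw$ does so at a point at least as far from line $uv$ as $x$, hence has an endpoint at least that far; following the constraint toward that endpoint, it must end inside the triangle (contradicting the extremality of $x$, after tie-breaking), properly cross $uw$ or $vw$ (contradicting that these are visibility edges), or reach $w$ (contradicting the hypothesis on $w$). Moreover every chain vertex other than $x$ is strictly closer to line $uv$ than $x$, so the concatenation turns the right way at $x$, and the two sub-chains lie on opposite sides of the line through $x$ and $w$, which yields global convexity. The sub-triangles $uxw$ and $xvw$ each have strictly fewer interior vertices, the union of the two empty sub-polygons is exactly the polygon bounded by $uw$, $wv$ and the concatenated chain, and any vertices left over in triangle $uvx$ lie outside that union, so emptiness and constraint-freeness are preserved. (Equivalently, one can define the chain non-inductively as the $w$-facing part of the convex hull of the vertices in the closed triangle and verify each hull edge is a visibility edge by the same boundary-crossing argument.)
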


\begin{figure}[ht]
  \begin{center}
    \includegraphics{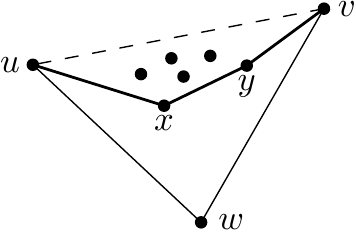}
  \end{center}
  \caption{A convex chain from $u$ to $v$ via $x$ and $y$.}
  \label{fig:VisiblePointInsideTriangle}
\end{figure}

If $u$ and $v$ do not see each other, the above lemma proves the existence of a convex path between them. We use this property repeatedly in our routing algorithm.

\section{Routing on the Constrained $\boldsymbol{\Theta_6}$-Graph}\label{sec:routing}

Prior to describing our routing strategy for the entire visibility graph, we first provide one for the constrained $\Theta_6$-graph. Note that the $\Theta_6$-graph is not necessarily plane. In this section, we assume that we are given the constrained $\Theta_6$-graph explicitly. In the next section, we show how to use this algorithm to route on the visibility graph. 

If there are no constraints, there exists a simple local routing algorithm that works on all $\Theta$-graphs with at least 4 cones. This routing algorithm, which we call $\Theta$-routing, always follows the edge to the closest vertex in the cone that contains the destination. In the constrained setting, this algorithm follows the edge to the closest vertex in the \emph{subcone} that contains the destination. Unfortunately, this approach does not necessarily succeed in the constrained setting due to two issues. First, a key factor of convergence in the unconstrained $\Theta$-routing algorithm is that each step gets us closer to the destination (as long as we have at least 6 cones). Unfortunately, this property need not hold in the constrained setting (see Figure~\ref{fig:ThetaRoutingLongStep}a). 

\begin{figure}[ht]
  \begin{center}
    \includegraphics{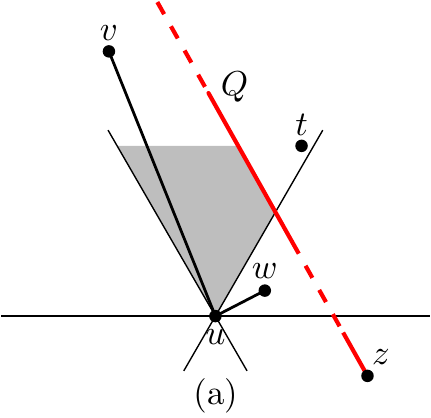}
    \hskip0.5cm
    \includegraphics{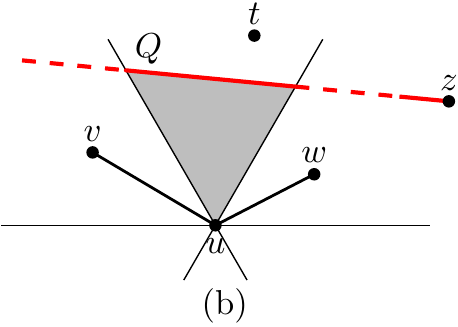}
    \hskip0.5cm
    \includegraphics{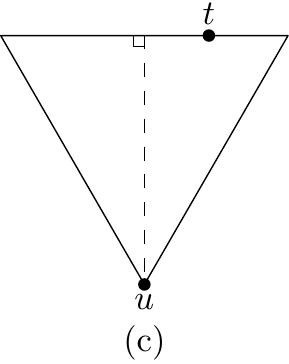}
  \end{center}
  \caption{(a) The situation in which $\Theta$-routing follows an edge to $v$ and ends up further away from the destination. (b) The situation where the $\Theta$-routing algorithm cannot follow any edges at $u$, since the destination $t$ lies behind a constraint. (c) The canonical triangle of $u$, $\triangle_{ut}$.}
  \label{fig:ThetaRoutingLongStep}
\end{figure}

A second, more important problem is that the cone containing the destination need not contain any visible vertices. This happens when a constraint is directly blocking visibility (see Figure~\ref{fig:ThetaRoutingLongStep}b). In this case, the $\Theta$-routing algorithm will get stuck, since it cannot follow any edge in that cone. 

The first problem can be easily fixed: given a vertex $u$ and the destination $t$, we define the \emph{canonical triangle} of $u$ with respect to $t$, denoted $\triangle_{ut}$, as the triangle with apex $u$, bounded by the cone boundaries of the cone of $u$ that contains $t$ and the line through $t$ perpendicular to the bisector of the cone (see Figure~\ref{fig:ThetaRoutingLongStep}c). If the edge of $u$ that lies in that cone ends outside the canonical triangle, we call the edge {\em invalid} and we ignore it. By ignoring invalid edges we make sure that any edge we follow leads to a vertex that is closer to $t$.  

To solve the second problem, the routing algorithm needs to find a path even when an obstacle is blocking visibility to the destination (either blocking all visibility from $u$ in the cone of $t$ or because the edge in that cone is invalid). In this case the algorithm enters the {\em obstacle avoidance phase}, routing differently until an endpoint of a blocking constraint is reached. 

Intuitively, our algorithm uses the $\Theta$-routing algorithm until it gets stuck, at which point it switches to the obstacle avoidance phase in order to get around a constraint blocking its visibility to $t$. After this phase ends, the algorithm switches back to the $\Theta$-routing algorithm. This process is repeated until $t$ is reached. A more precise description follows in Section~\ref{sec:GlobalStrategy}. 

\subsection{Obstacle Avoidance Phase}
We first describe the obstacle avoidance phase. The algorithm enters this phase when routing from source $s$ to destination $t$, and reaches a vertex $u$ that does not have any valid edges in the cone that contains $t$. This can only happen if a constraint $Q$ is blocking visibility to $t$ (if many of them exist, let $Q$ be the one whose intersection with segment $\overline{ut}$ is closest to $u$). The goal of this phase is to reach the right endpoint of $Q$, which we denote as $z$. The main difficulty with this phase is that the algorithm does not know where $z$ is, since $Q$ is not incident on $u$. In order to overcome this difficulty, the algorithm exploits several geometric properties arising from the unique symmetries present in the constrained $\Theta_6$-graph, some of which are outlined in the proof of Lemma \ref{lemma:Reaching the endpoint-1}.

Without loss of generality, $t$ lies in $C^u_0$. We first describe the case where $u$ has no edges in $C_0$. The general case, where $u$ may have invalid edges in $C_0$, will be considered afterwards. In this first case, the algorithm proceeds as follows. At a current vertex $m$,  the algorithm considers one of two candidate edges to follow (see Figure~\ref{fig:RoutingChoice}). The first is the edge to the closest visible vertex $v$ in the subcone of $C^m_2$ that shares a boundary with $C^m_1$. The second edge is the edge from $m$ to the vertex $w$ in $C^m_1$ that minimizes the angle $\alpha$ between $\overline{mw}$ and the right boundary of $C^m_0$. If $v$ lies in $C^w_4$ and $m$ is not the endpoint of a constraint that intersects the interior of triangle $m v w$, the algorithm follows the edge to $v$. Otherwise, it follows the edge to $w$. In the proof of Lemma \ref{lemma:Reaching the endpoint-1}, we show that at least one of $v$ or $w$ exists. If one of the two vertices $v$ or $w$ does not exist, the algorithm follows the edge that does exist.  The obstacle avoidance phase ends when the algorithm reaches the endpoint of a constraint that intersects $\overline{ut}$. In order to recognize this, the algorithm stores $u$ when the phase begins.

\begin{figure}[ht]
  \begin{center}
    \includegraphics{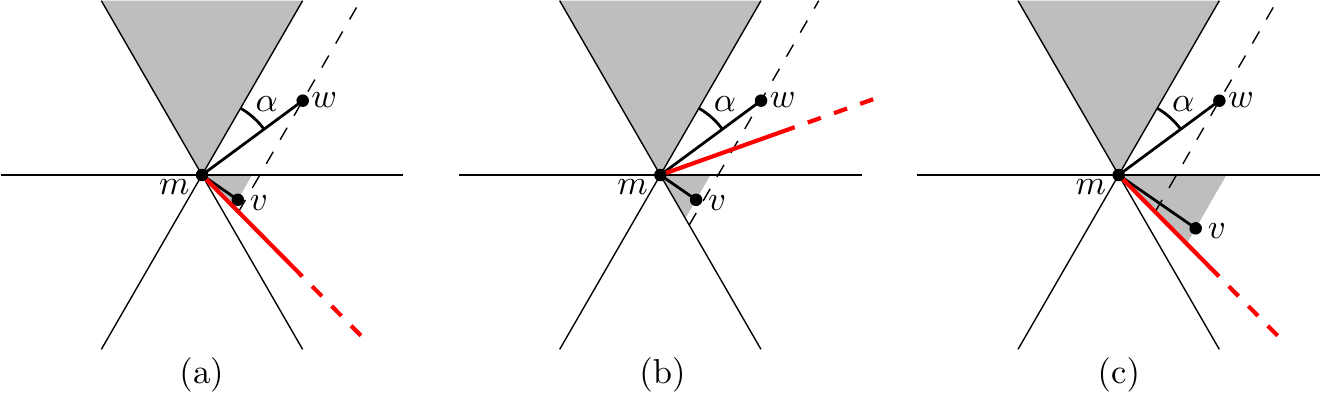}
  \end{center}
  \caption{Routing from a vertex $m$. (a) Follow the edge to $v$, since $v$ lies in $C^w_4$. (b) Follow the edge to $w$, since $m$ is the endpoint of a constraint that intersects $m v w$. (c)~Follow the edge to $w$, since $v$ lies outside of $C^w_4$.}
  \label{fig:RoutingChoice}
\end{figure}

\begin{lemma}\label{lemma:Reaching the endpoint-1}
When $u$ has no edges in the cone containing the destination $t$, the obstacle avoidance phase initiated by $u$ reaches the right endpoint $z$ of the closest constraint $Q$ blocking visibility to $t$.
\end{lemma}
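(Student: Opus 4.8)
The plan is to track the current vertex $m$ throughout the obstacle avoidance phase and maintain two invariants: (i) $m$ can see some point of $\overline{ut}$ — equivalently, $m$ lies in the region bounded by $\overline{ut}$, the constraint $Q$, and the chain of constraints between them that is "to the left" of the phase's progress; and (ii) each step makes monotone progress in a suitable potential, so the phase cannot cycle and must terminate. Since the phase only ends when an endpoint of a constraint crossing $\overline{ut}$ is reached, termination plus the invariant that we are always moving toward the right endpoint $z$ of $Q$ will give the claim. First I would set up coordinates with $t \in C^u_0$, so that $\overline{ut}$ points roughly "up", and describe the geometry of the two candidate edges $mv$ (closest vertex in the subcone of $C^m_2$ adjacent to $C^m_1$) and $mw$ (the vertex in $C^m_1$ minimizing the angle to the right boundary of $C^m_0$); I would first argue that at least one of $v$, $w$ exists, by using that $m$ sees a point of $\overline{ut}$ and hence has a visible vertex somewhere in the union $C^m_1 \cup (\text{that subcone of } C^m_2)$ — here Lemma~\ref{lem:ConvexChain} applied to the triangle formed by $m$, the blocked direction, and a visible point of $\overline{ut}$ supplies a convex chain whose first edge lands in the required region.

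Next I would handle the two step types. When we follow the edge to $w \in C^m_1$: $w$ is strictly to the right of $m$ (in the direction perpendicular to the bisector of $C^u_0$) because $C^m_1$ lies entirely on that side, so we make progress toward $z$ and stay below the relevant part of $Q$; I would show $w$ still sees a point of $\overline{ut}$, or else that $w$ is itself an endpoint of a constraint crossing $\overline{ut}$ (which ends the phase — and I would need to check this endpoint is indeed $z$, using that $Q$ was chosen as the blocking constraint whose crossing point is closest to $u$, together with the non-crossing property of $S$). When we follow the edge to $v$: the conditions "$v \in C^w_4$" and "$m$ is not an endpoint of a constraint cutting $\triangle mvw$" are exactly what is needed to invoke Lemma~\ref{lem:ConvexChain} on $\triangle mvw$, certifying that $v$ is reachable and that moving to $v$ does not skip past $z$; I would argue $v$ also lies weakly to the right and still sees $\overline{ut}$ (or ends the phase at $z$). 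The symmetries of the constrained $\Theta_6$-graph — in particular that a cone $C^m_1$ "points" into the previous vertex's $C_4$, and that $C_2$'s boundary aligns with $C_1$'s — are what make these case checks close up cleanly.

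The main obstacle, I expect, is proving termination with the right monotone quantity: the horizontal coordinate (distance along the perpendicular to the bisector of $C^u_0$) is non-decreasing on $w$-steps but $v$-steps could in principle move left, so a single linear functional may not suffice. I would instead argue that $v$-steps strictly decrease the distance along the bisector of $C^m_2$ (the defining measure for that edge) while not increasing some secondary quantity, and that the two step types cannot alternate forever — e.g. by showing a $v$-step is always eventually followed by a $w$-step, or by exhibiting a lexicographic potential. Wrapping up: once termination and "always heading right along $Q$ / staying visible to $\overline{ut}$" are established, the phase stops precisely at an endpoint of a constraint crossing $\overline{ut}$, and the invariants force that endpoint to be the right endpoint $z$ of $Q$ itself rather than of some other constraint or the left endpoint of $Q$; the latter is ruled out because every step has non-negative horizontal progress and starts from $u$, which lies to the left of $z$.
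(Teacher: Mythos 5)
There is a genuine gap: your proof is organized around the wrong invariant, and as a result it omits what is actually the technical heart of the argument. The paper's invariant is not ``$m$ sees a point of $\overline{ut}$'' but the much stronger statement that every intermediate vertex $m$ has \emph{no edges at all in $C_0^m$} (equivalently, $Q$ still completely blocks $m$'s cone toward $t$). This stronger invariant is what makes everything else work: since $C_0^m$ contains no visible vertices, the convex chain from $m$ to $z$ obtained from Lemma~\ref{lem:ConvexChain} (applied to $m$, a point of $Q\cap C_0^m$, and $z$) is forced to start with an edge in $C_1^m\cup C_2^m$, which is exactly why $v$ or $w$ exists; and it guarantees that the right boundary of $C_0^m$ still meets $Q$, which is what pins the phase to $Q$ (rather than to some other constraint crossing $\overline{ut}$) and forces it to converge on $z$ specifically. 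With only your weaker invariant, the first edge of your convex chain could land inside $C_0^m$, your existence argument for $v$ or $w$ breaks, and nothing prevents the walk from drifting to the endpoint of a different constraint. Moreover, you give no argument that your invariant (let alone the correct one) is \emph{preserved} by a $v$-step or a $w$-step; in the paper this preservation is the bulk of the proof, requiring a sweep of $C_0^v\setminus C_0^m$ by the right boundary of $C_0^m$, an appeal to canonical triangles, and two applications of Lemma~\ref{lem:ConvexChain} to derive contradictions with the minimality of $v$ and $w$.

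Your worry about termination also stems from a geometric misreading. With the cones numbered clockwise from the upward cone $C_0$, both $C_1^m$ (upper right) and $C_2^m$ (lower right) lie entirely in the open right half-plane of $m$, so a $v$-step moves strictly to the right just as a $w$-step does; there is no need for a lexicographic potential or an alternation argument. The paper's potential is exactly the single monotone quantity you first considered: the intersection of the right boundary of $C_0^m$ with $Q$ moves strictly closer to $z$ at every step, and finiteness of $P$ finishes the termination argument. So the correct high-level skeleton is close to what you sketched, but the proposal as written is missing the empty-cone invariant, its (nontrivial) preservation proof, and the correct identification of the monotone progress measure.
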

\begin{proof}
Without loss of generality, let $t$ lie in $C^u_0$. Since $u$ has no edges in $C_0$, the closest constraint $Q$ must intersect both boundaries of $C_0^u$. This implies that $z$ is either in $C^u_1$ or $C^u_2$. We maintain the invariant that each intermediate vertex $m$ has no edges in $C_0^m$ and that the intersection of the right boundary of $C^m_0$ and $Q$ is closer to $z$ than in the previous step. We first show that there always exists either a $w$ in $C_1^m$ or a $v$ in $C_2^m$ as defined in the paragraph preceding this lemma. This implies that our algorithm eventually reaches $z$ since there are a finite number of points in $P$.

As a consequence of our invariant, $z$ must either lie in $C_1^m$ or $C_2^m$.  Since $m$ has no edges in $C_0$, we have that $Q$ is the closest constraint to $m$ in $C_0^m$. Thus, any point $x$ on $Q \cap C_0^m$ is visible from both $m$ and $z$. Hence, we can apply Lemma~\ref{lem:ConvexChain} to the triangle $m x z$ and obtain a convex chain of visibility edges from $m$ to $z$. In particular, this implies that $m$ can see a vertex in $C_1 \cup C_2$, and therefore it has an edge in $C_1 \cup C_2$. What remains to be shown is that the invariant is maintained after every step of the algorithm. We note that for any vertex in $C_1^m \cup C_2^m$ the intersection of the right boundary of its cone $C_0$ is closer to $z$ than that of $m$. Thus, it remains to show that $C_0$ of this next vertex contains no edges. We consider the following two cases.
  
\begin{description}
\item [The algorithm follows the edge to $\boldsymbol{v}$.] 
If the algorithm follows the edge to $v$, recall that $v$ lies in $C^w_4$ and $m$ is not the endpoint of a constraint that intersects the interior of triangle $m v w$. In particular, this means that $w$ lies outside of $C_0^v$.  Since $v$ is the closest visible vertex in the subcone of $C_2^m$ that shares a boundary with $C_1^m$, the part of $C_0^v$ below the horizontal line through $m$ must be empty of points visible to $v$ (see Figure~\ref{fig:RoutingToV}a).
  
  \begin{figure}[ht]
    \begin{center}
      \includegraphics{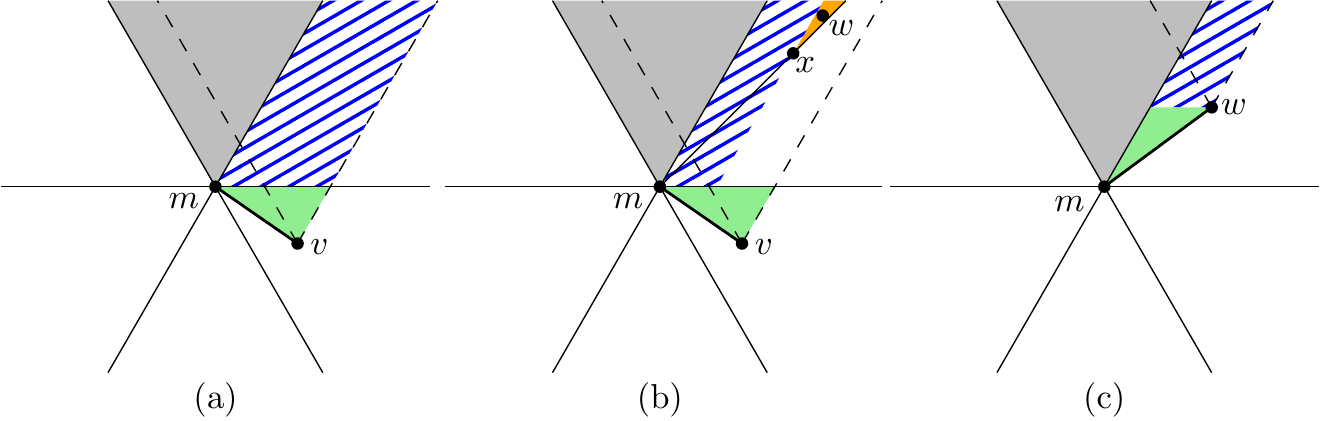}
    \end{center}
    \caption{(a) If $m$ routes to $v$, the union of green and blue regions must be empty of points. (b) An illustration of the proof: if the region is not empty, we find a point $x$ that must have an edge with $m$ that we would have followed instead of $v$. (c) Routing from $m$ to $w$.}
    \label{fig:RoutingToV}
  \end{figure}
  
By the invariant, $C_0^m \cap C_0^v$ is empty of visible points. What remains to be shown is that there are no points visible to $v$ in $C_0^v\setminus C_0^m$ above the horizontal line through $m$. If this region is not empty, we sweep the region using the right boundary of $C_0^m$. Let $x$ be the first vertex hit by this sweep that is visible to $m$ (see Figure~\ref{fig:RoutingToV}b), and consider the canonical triangle $\triangle_{xm}$. Recall that this triangle has $x$ as apex and is bounded by the cone boundaries of the cone of $x$ that contains $m$ and the line through $m$ perpendicular to the bisector of the cone. In particular $\triangle_{xm}$ is empty of points visible to $x$ (since it is contained in the union of $C_0^m$, which is empty), the swept part of $C_1^m$, and a portion of $C_2^m$ that must also be empty by our choice of $v$. This implies that there is an edge from $x$ to $m$. This means that  $w$ must exist. By construction, $\overline{mw}$ forms the smallest angle with the right boundary of $C_0^m$. This means that $x\in \triangle_{mw}$. Furthermore, since $mw$ and $mv$ are visibility edges, Lemma~\ref{lem:ConvexChain} implies the existence of a vertex visible to $w$ in $\triangle_{wm}$. This contradicts the existence of the edge $mw$. Thus, $C_0^v$ is empty of vertices visible to $m$. Suppose that there was a vertex $y$ visible to $v$ in  $C_0^v$, then since $vy$ and $vm$ are visibility edges, Lemma~\ref{lem:ConvexChain} implies the existence of a vertex visible to $m$ in $C_0^v$, which is a contradiction.
  
\item[The algorithm follows the edge to $\boldsymbol{w}$.] As in the previous case, we consider the part below the horizontal line through $w$ and the part above (solid green and dashed blue regions in Figure~\ref{fig:RoutingToV}c, respectively). The former region must be empty or the edge $m w$ would not be present: any point visible to $m$ in this region prevents $m$ from creating an edge to $w$ and vice versa. An argument similar to the one for $v$, showing that the region above the horizontal boundary of $C_1$ is empty, also proves that the region above the horizontal line through $w$ is empty. Thus, $C_0^w$ must be empty of points visible to $w$.
\end{description}\vspace{-2em}
\end{proof}

We now consider the general case, where $u$ may have invalid edges in $C_0$ (see Figure~\ref{fig:ObstacleAvoidance}a). In this case, when $u$ initiates the obstacle avoidance phase, we either reach $z$ or a vertex $m$ that has no edges in $C_1$ and $C_2$ (see Figure~\ref{fig:ObstacleAvoidance}b). This latter case can only occur when $z$ lies in $C_3^m$. Note that this implies that $Q$ intersects both boundaries of $C_1^m$. Therefore, we initiate a new obstacle avoidance phase from $m$ where $C_1$ plays the role of $C_0$. By Lemma \ref{lemma:Reaching the endpoint-1}, the second invocation of the obstacle avoidance phase must reach $z$. 

\begin{figure}[ht]
  \begin{center}
    \includegraphics{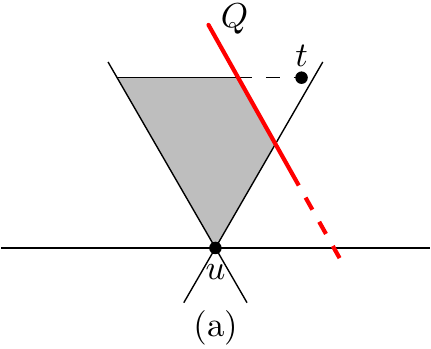}
    \hskip0.5cm
    \includegraphics{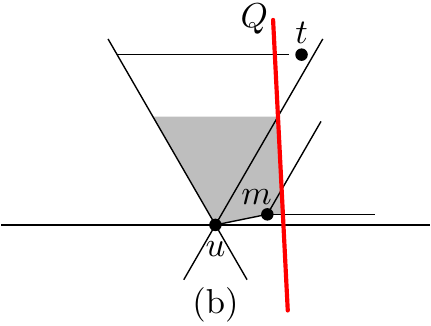}
  \end{center}
  \caption{(a) When $Q$ does not fully block the visibility of $C_0$, we maintain the invariant that the visible portion of the canonical triangle (gray region) must be empty along our routing. Note that edge $uv$ is invalid. (b) The situation where we restart the obstacle avoidance algorithm at $m$.}  \label{fig:ObstacleAvoidance}
\end{figure}

\begin{lemma}\label{lemma:Reaching the endpoint}
When $u$ has no valid edges in the cone containing the destination $t$, the general obstacle avoidance phase initiated by $u$ reaches the right endpoint $z$ of the closest constraint $Q$ blocking visibility to $t$.
\end{lemma}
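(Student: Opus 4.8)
The plan is to reduce this general statement to Lemma~\ref{lemma:Reaching the endpoint-1} by running the same obstacle avoidance routine while maintaining a weaker, ``clipped'' version of that lemma's invariant, and then to show that whenever the routine can no longer make the default choice, the current configuration satisfies the hypotheses of Lemma~\ref{lemma:Reaching the endpoint-1} after relabelling the cones. Without loss of generality assume $t \in C_0^u$. First I would check that $Q$ really exists: if $u$ had no valid edge toward $t$ but $t$ were visible from $u$, then $t$ would be a candidate in its subcone of $C_0^u$, so the edge chosen in that subcone would land no farther along the bisector than the projection of $t$ and hence inside $\triangle_{ut}$ — a valid edge, contradiction. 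If $Q$ already crosses both boundaries of $C_0^u$ we are literally in the setting of Lemma~\ref{lemma:Reaching the endpoint-1} and there is nothing to do, so the only genuinely new case is when $Q$ crosses $\overline{ut}$ but not both boundaries of $C_0^u$, i.e.\ when $u$ has an invalid edge in $C_0^u$ (Figure~\ref{fig:ObstacleAvoidance}a).

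For this case I would keep the routine unchanged — at the current vertex $m$ take the candidates $v$ and $w$ exactly as before and follow $v$ iff $v\in C_4^w$ and no constraint incident to $m$ crosses triangle $mvw$ — but replace the invariant ``$C_0^m$ is empty of vertices visible to $m$'' by the clipped invariant that the portion of $\triangle_{mt}$ lying on the $m$-side of $Q$ contains no vertex visible to $m$ (the gray region in Figure~\ref{fig:ObstacleAvoidance}a), together with the monotonicity claim that the intersection of $Q$ with the right boundary of $C_0^m$ moves strictly closer to $z$ at every step. The starting vertex $u$ satisfies the clipped invariant by essentially the same argument that shows $Q$ exists. To get that at least one of $v,w$ is available and that the invariant is preserved, I would re-run the proof of Lemma~\ref{lemma:Reaching the endpoint-1} almost verbatim: choosing a point $x$ on $Q$ inside $C_0^m$, both $m$ and $z$ see $x$, so Lemma~\ref{lem:ConvexChain} applied to triangle $mxz$ gives a convex visibility chain from $m$ to $z$ whose first edge lies in $C_1^m\cup C_2^m$; and the two update cases (follow $v$, follow $w$) go through as before, the only change being that the sweep of $C_0^{v}$ (resp.\ $C_0^{w}$) that produces the contradictory vertex is now restricted to the side of $Q$ that the new vertex sees — the part of the new cone beyond $Q$ is unseen and, by the monotonicity claim, every later vertex stays on that same side of $Q$, so (together with the fact that $Q$ remains the closest blocking constraint in the relevant cone) the argument still certifies the clipped invariant at the next vertex.

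Next I would analyse termination. Since $P$ is finite and the $Q$–boundary intersection progresses monotonically toward $z$, after finitely many steps we either reach $z$ — done — or reach a vertex $m$ with no edges at all in $C_1^m\cup C_2^m$ (Figure~\ref{fig:ObstacleAvoidance}b). In the latter case $z\notin C_1^m\cup C_2^m$, since otherwise the Lemma~\ref{lem:ConvexChain} argument of the previous paragraph would again produce an edge of $m$ in $C_1^m\cup C_2^m$; combined with the monotonicity claim this pins $z$ down to $C_3^m$, which in turn forces $Q$ to cross both boundaries of $C_1^m$. That is precisely the hypothesis of Lemma~\ref{lemma:Reaching the endpoint-1} with $C_1^m$ in the role of $C_0$: the vertex $m$ has no edge in $C_1^m$ and the closest blocking constraint crosses both of its boundaries. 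Invoking Lemma~\ref{lemma:Reaching the endpoint-1} then shows that the obstacle avoidance phase restarted at $m$ reaches $z$. Finally I would observe that this restart can occur at most once: after it, $Q$ fully blocks the cone being used, so we are back in the clean base case and never re-enter the ``invalid edge'' situation.

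The step I expect to be the main obstacle is the careful re-run of the two update cases of Lemma~\ref{lemma:Reaching the endpoint-1} under the clipped invariant — in particular verifying that emptiness of only the $Q$-side of $C_0^m\cap\triangle_{mt}$ still suffices both for the sweeping argument that locates the contradictory vertex and for the closing application of Lemma~\ref{lem:ConvexChain} that rules out a vertex visible to $v$ (resp.\ $w$) inside its own cone $C_0$ — together with the companion point that being stuck with no edges in $C_1^m\cup C_2^m$ genuinely certifies $z\in C_3^m$ with $Q$ blocking all of $C_1^m$, so that the restarted phase meets the hypotheses of Lemma~\ref{lemma:Reaching the endpoint-1}. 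The remaining ingredients — existence of $v$ or $w$, monotone progress of the $Q$-boundary intersection, finiteness of $P$, and the at-most-one-restart bookkeeping — are routine once these are settled.
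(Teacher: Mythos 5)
Your proposal is correct and follows essentially the same route as the paper: the paper's (very brief) argument likewise maintains the clipped ``visible portion of the canonical triangle is empty'' invariant, runs the unchanged routine until either $z$ is reached or a vertex $m$ with no edges in $C_1^m\cup C_2^m$ is encountered, and in the latter case observes that $z\in C_3^m$ and $Q$ crosses both boundaries of $C_1^m$, so Lemma~\ref{lemma:Reaching the endpoint-1} can be reinvoked with $C_1$ in the role of $C_0$. Your write-up actually supplies more of the verification (the re-run of the two update cases under the clipped invariant, the at-most-one-restart bookkeeping) than the paper, which leaves these details implicit.
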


We note that the above proof relies heavily on the fact that we have exactly 6 cones (and thus we are in the constrained $\Theta_6$-graph). We have a specific example in which the routing strategy described above would fail for 14 cones (for some node, no edge will keep an invariant zone empty, see Figure~\ref{fig:NoVertexExtendingEmptyTriangle}). Thus, a different obstacle avoidance method is needed when the number of cones is not $6$. 

\begin{figure}[ht]
  \begin{center}
    \includegraphics{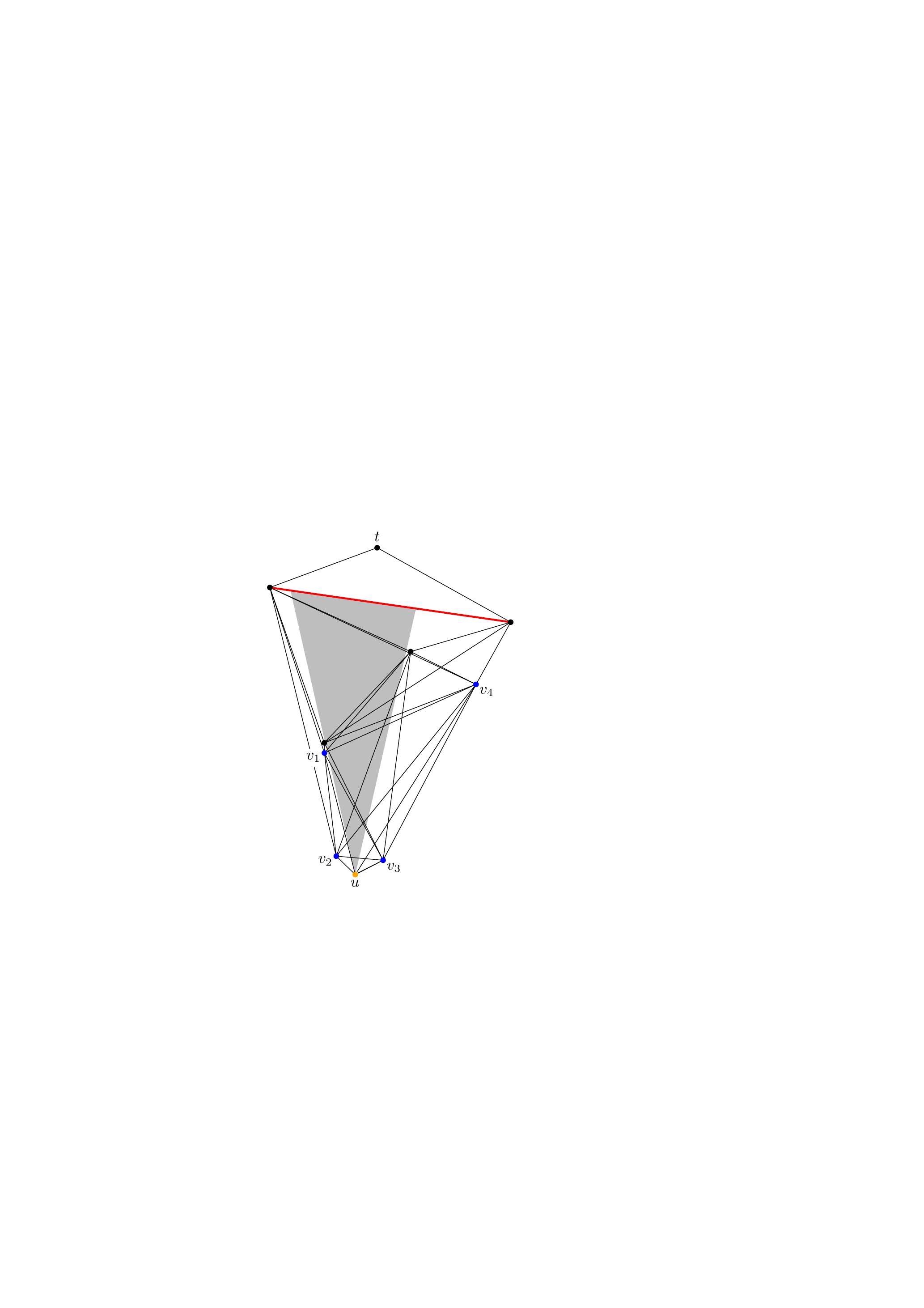}
  \end{center}
  \caption{The situation where no edge from $u$ (orange) to its neighbors $v_1$, $v_2$, $v_3$, and $v_4$ (blue) preserves the empty triangle (gray) used in the above proof, when using 14 cones.}
  \label{fig:NoVertexExtendingEmptyTriangle}
\end{figure}

\subsection{Global Routing Strategy}
\label{sec:GlobalStrategy}
We now have all the pieces in place to describe our routing strategy. Our routing strategy alternates between three phases: while not blocked by an obstacle, we use the classic $\Theta$-routing algorithm. If the current vertex has no valid edges in the cone containing the destination, it must be blocked by a constraint $Q$. In this case, we enter the obstacle avoidance phase to reach the right endpoint of $Q$. Once we reach this endpoint, we check which of the two endpoints of $Q$ is closer to the destination. 

If the closest point to destination is the other endpoint of $Q$, we enter the {\em opposite endpoint phase}, where we route to this other endpoint of $Q$. Note that the two endpoints of $Q$ can see each other, so we can route between them using the strategy introduced in~\cite{BFRV2017RoutingJournal}. The general idea behind the routing strategy in~\cite{BFRV2017RoutingJournal} is to stay as close as possible to the visibility edge between the endpoints of $Q$ in order to avoid becoming stuck behind constraints. 

Once we have reached the endpoint of $Q$ that is closest to the destination, we resume classic $\Theta$-routing. We call this alternation between the three phases the {\em constrained $\Theta_6$-routing strategy}.

\subsection{Convergence}
We now show that our routing algorithm always reaches the destination. First we give a proof of convergence which greatly overestimates the number of steps needed to reach the destination, but it turns out that first showing that the algorithm always reaches the destination simplifies the proof of bounding the number of steps. 

\begin{lemma}
  \label{lem:Convergence}
  The  constrained $\Theta_6$-routing strategy always reaches the destination within a finite number of steps.
\end{lemma}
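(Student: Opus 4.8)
The plan is to show that only finitely many steps can be taken in each of the three phases, and that the phases cannot cycle back on each other indefinitely. First I would handle the obstacle avoidance phase and the opposite endpoint phase: by Lemma~\ref{lemma:Reaching the endpoint}, whenever the obstacle avoidance phase is initiated by a vertex $u$ it terminates at the right endpoint $z$ of the blocking constraint $Q$; since each step of that phase strictly advances the intersection of the right cone boundary with $Q$ toward $z$ and there are only $n$ vertices, it uses at most $O(n)$ steps. Similarly, the opposite endpoint phase routes between two mutually visible vertices using the strategy of~\cite{BFRV2017RoutingJournal}, which terminates (and does so in a bounded number of steps); so each single invocation of either special phase is finite. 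The $\Theta$-routing phase between obstacle encounters is finite because every valid edge followed goes to a vertex strictly inside the current canonical triangle $\triangle_{ut}$, hence strictly closer to $t$ along the bisector; since distances to $t$ along a fixed direction take finitely many values over $P$, no vertex can repeat within one $\Theta$-routing segment.

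The crux of the argument — and what I expect to be the main obstacle — is ruling out an infinite alternation \emph{between} phases: we must exhibit a global progress measure that decreases (or at least does not increase, together with a secondary tie-breaker) each time we complete one obstacle-avoidance-then-$\Theta$-routing cycle. The natural candidate is something like the Euclidean distance from the current vertex to $t$, or the pair (distance to $t$, index of the blocking constraint along $\overline{st}$-type orderings). The key geometric fact to establish is that once we have escaped around a constraint $Q$ — i.e., reached the endpoint of $Q$ closest to $t$ and resumed $\Theta$-routing — we never again become blocked by that same constraint $Q$, nor return to a configuration already visited. Here I would argue that after passing the closer endpoint of $Q$, subsequent $\Theta$-routing steps and any further obstacle avoidance keep us on the $t$-side of $Q$, so $Q$ can no longer separate the current vertex from $t$; combined with the finiteness of the constraint set $S$, each constraint is "resolved" at most once (or a bounded number of times), giving an overall bound.

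Assembling these pieces: there are at most $|S|$ obstacle-avoidance episodes (one per constraint that ever blocks us, plus the restart in the general case which is charged to the same constraint), each followed by at most one opposite-endpoint episode and one $\Theta$-routing segment, and each episode/segment is itself finite by the paragraph above. Hence the total number of steps is finite. I would keep the bound deliberately crude here (e.g.\ $O(n)$ episodes times $O(n)$ steps each) since the excerpt explicitly says a sharper step bound is deferred to a later proof that relies on this lemma; the only real content needed now is termination, so the monotonicity-across-phases claim is where I would spend essentially all the effort, and it is the step most likely to require a careful case analysis mirroring the invariants used in Lemmas~\ref{lemma:Reaching the endpoint-1} and~\ref{lemma:Reaching the endpoint}.
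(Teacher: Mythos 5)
Your decomposition into per-phase finiteness plus a bound on the number of phase alternations matches the paper's structure, and your per-phase arguments (monotone progress along the bisector for $\Theta$-routing, the advancing intersection with $Q$ for obstacle avoidance, the cited bound for the opposite endpoint phase) are fine. The gap is exactly where you predicted it would be, and your proposed resolution of it does not work. You claim that after reaching the endpoint of $Q$ closest to $t$ and resuming $\Theta$-routing, the walk stays on the $t$-side of $Q$ and hence $Q$ never blocks again, giving at most $|S|$ obstacle-avoidance episodes. This is false in general: a \emph{subsequent} obstacle avoidance or opposite endpoint phase, triggered by a \emph{different} constraint, can carry the walk back across the line through $Q$, after which $Q$ can block visibility to $t$ again. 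The paper explicitly remarks, immediately after its proof, that ``a single constraint can trigger an obstacle avoidance phase many times.'' Only the $\Theta$-routing phase is monotone toward $t$; the other two phases are not, so there is no immediate global progress measure of the kind you hope for. Note also that the stronger statement you are reaching for (each constraint's closest endpoint is reached at most once, Lemma~\ref{cor_visitonce}) is derived in the paper \emph{from} termination via the memoryless property, so invoking it here would be circular.

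The paper's actual argument is weaker but sufficient: let $z$ be the endpoint of $Q$ closest to $t$ and let $D$ be the disk centred at $t$ of radius $|tz|$. For $Q$ to block again, the walk must reach a vertex $v$ strictly on the non-$t$ side of the line through $Q$; but the step taken at $z$ lands in $H_t \cap D$, and since the other endpoint of $Q$ lies outside $D$, getting from there to $v$ forces the walk to leave $D$. Leaving $D$ is impossible during $\Theta$-routing, so it can only happen via an obstacle avoidance phase of a constraint whose closest endpoint lies \emph{outside} $D$, i.e., strictly farther from $t$ than $z$. Ordering constraints by decreasing distance of their closest endpoint to $t$, this yields that $Q_i$ triggers at most $2^{i-1}$ avoidance phases, hence at most $2^k-1$ alternations overall and termination in $O(n\cdot 2^k)$ steps --- crude, but finite, which is all this lemma needs. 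You would need to replace your ``stays on the $t$-side'' claim with an argument of this kind (or some other well-founded ordering on re-triggering events) for the proof to go through.
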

\begin{proof}
  By construction, each edge followed during the $\Theta$-routing phase gets closer to the destination. Hence, each $\Theta$-routing phase can consist of at most $n$ steps. Similarly, an obstacle avoidance phase performs at most $n$ steps, since each step brings the boundary of cone $C_0$ closer to the endpoint we are routing to.  At the end of an obstacle avoidance phase, we may need an opposite endpoint phase which visits each vertex at most once~\cite{BFRV2017RoutingJournal}. Thus, each cycle of these three phases consists of at most $3n$ steps. 
  
  Thus, in order to show termination it remains to bound the number of alternations between phases. Each invocation of an obstacle avoidance phase is tied to a single constraint $Q$. Even though $Q$ can trigger several obstacle avoidance phases, we claim that the total number is bounded. Let $z$ be the endpoint of $Q$ that is closest to $t$. We claim that between two obstacle avoidance phases triggered by $Q$ we must perform an obstacle avoidance phase using another constraint $Q'$ whose endpoint $z'$ that is closest to $t$, lies further away from $t$ than $z$. 
  
Let $D$ be the closed disk with center $t$ and radius $|t z|$. We need to show that before using $Q$ for another obstacle avoidance phase, we must reach an endpoint $z'$ that lies outside $D$. 
  
  In order for $Q$ to trigger another obstacle avoidance phase, the routing path needs to first reach a vertex $v$ such that $Q$ blocks visibility to $t$ from $v$. This implies that $v$ and $t$ lie in different halfplanes with respect to the line through $Q$. Furthermore, $v$ cannot lie on this line, since $Q$ needs to block visibility between $v$ and $t$. Let $H_v$ be the open halfplane that contains $v$ and let $H_t$ be its complementary closed halfplane (see Figure~\ref{fig:VisitFurtherConstraint}). 

\begin{figure}[ht]
  \begin{center}
  \includegraphics{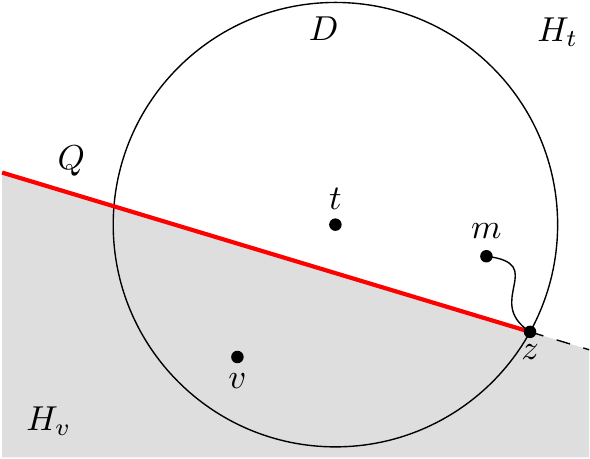}
  \end{center}
  \caption{For constraint $Q$ to trigger another obstacle avoidance phase after reaching $z$, we need to reach to a vertex $v \in H_v$ (gray, note that $v$ need not lie in $D$). However, after leaving $z$, we first reach a vertex $m \in H_t \cap D$. Since $Q$ prevents the path between $m$ and $v$ from remaining inside $D$, we need to leave $D$ at some point and this can only be caused by a new constraint $Q'$ whose endpoints are further from $t$ than $z$ is.}
  \label{fig:VisitFurtherConstraint}
\end{figure}

Consider the routing step we performed at $z$ after reaching it as the endpoint of the obstacle avoidance phase of $Q$. Specifically, we look at which of the possible phases this step was executed. If the algorithm started a $\Theta$-routing phase, the very first step must be towards the interior of $H_t \cap D$ (since each step of $\Theta$-routing gets closer to $t$ and follows an edge in the subcone that contains $t$). The other option is that we immediately start another obstacle avoidance phase because of a new constraint. If the endpoint closest to $t$ of this new constraint lies outside $D$ we are done (since the obstacle avoidance phase will take us to the endpoint $z'$ that we claimed), so assume that this endpoint lies inside $D$. Furthermore, since this constraint blocks visibility from $z$ to $t$, its endpoint closest to $t$ must lie in $H_t$. Thus, we conclude that regardless of which strategy we used to route, after we leave $z$ we must reach an intermediate vertex $m$ that lies in $H_t \cap D$ before we visit $v$. 
  
  Overall, we have that our proposed routing reaches $z$, then $m \in H_t \cap D$, and somehow then goes to some vertex $v \in H_v$. Recall that $z$ was defined as the closest endpoint of $Q$ to $t$. In particular, the other endpoint is outside $D$. Thus, even though $v$ may be in $D$, in order for the routing strategy to reach any point of $H_v$ it must leave $D$ at some point. Since during the $\Theta$-routing phase we cannot get further away from $t$, the only way that this can happen is via an obstacle avoidance phase where both endpoints lie outside $D$, proving our claim. 
  
  With this claim shown, we can now proceed to bound the number of alternations between the different phases. Let $Q_1, \ldots, Q_k$ be all the constraints sorted by decreasing distance of their closest endpoint to $t$. Let $z_i$ be the endpoint of $Q_i$ closest to $t$ (i.e., for any $1\leq i<j \leq k$, we have that $|tz_i|>|tz_j|$). Notice that $Q_1$ cannot invoke more than one obstacle avoidance phase since there are no constraints whose closest endpoint $z_i$ is further from $t$ than $z_1$. By a similar reasoning, we can show that $Q_2$ can trigger two obstacle avoidance phases, $Q_3$ four such phases, and in general $Q_i$ cannot invoke an obstacle avoidance phase more than $2^{i-1}$ times. Since there are $k$ constraints, there cannot be more than $2^0+2^1+ \ldots 2^{k-1}= 2^k - 1$ invocations of an obstacle avoidance phase. As argued at the beginning of the proof, we execute at most $3n$ steps between two obstacle avoidance phases, thus the total number of steps is upper bounded by $O(n \cdot 2^k)$. 
\end{proof}

Note that the above reasoning shows that a single constraint can trigger an obstacle avoidance phase many times. Having shown that our algorithm terminates after a finite number of steps, we can refine the argument to reduce the number of triggers per constraint to exactly one.

\begin{lemma}\label{cor_visitonce}
Let $Q$ be a constraint and let $z$ be the endpoint of $Q$ that is closest to $t$. Vertex $z$ can be visited as the final vertex of at most one obstacle avoidance or opposite endpoint phase. 
\end{lemma}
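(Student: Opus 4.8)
The plan is to strengthen the counting argument from the proof of Lemma~\ref{lem:Convergence}. There we showed that between two obstacle avoidance phases triggered by the same constraint $Q$, the routing must perform an obstacle avoidance phase on some other constraint $Q'$ whose closest endpoint $z'$ is strictly further from $t$ than $z$. Here I would push the same idea one level further: I claim that $z$ can be reached as the final vertex of an obstacle avoidance (or opposite endpoint) phase at most once, by showing that once the routing arrives at $z$ in this way, it never again returns to the closed disk $D$ with center $t$ and radius $|tz|$ in a manner that would let any phase terminate at $z$.

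First I would set up the disk $D = \{p : |tp| \le |tz|\}$ exactly as in Lemma~\ref{lem:Convergence}, and record the key monotonicity facts already established there: (i) every step of a $\Theta$-routing phase strictly decreases the distance to $t$, hence stays within (indeed moves into the interior of) $D$; (ii) an obstacle avoidance phase can only take the path outside $D$ if it is triggered by a constraint both of whose endpoints lie outside $D$; and (iii) an opposite endpoint phase routes between the two endpoints of a constraint, staying close to the visibility edge joining them. The second step is to argue that immediately after the phase that ends at $z$, the routing re-enters $\Theta$-routing (or an opposite endpoint phase toward the far endpoint of $Q$, which lies outside $D$), and in either case the path is pinned inside $D$ until some constraint $Q'$ with both endpoints outside $D$ forces it out. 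But $z$ is, by definition, the \emph{closest} endpoint of $Q$ to $t$, so a phase terminating at $z$ must be tied to $Q$ itself; any later phase that could terminate at $z$ would again have to be tied to $Q$, and — by the claim proved inside Lemma~\ref{lem:Convergence} — reaching a vertex from which $Q$ blocks visibility to $t$ again requires first visiting $H_t \cap D$ and then leaving $D$, which can only happen via a constraint whose closest endpoint is strictly further from $t$ than $z$. Once the path has left $D$ through such a $Q'$, its closest endpoint $z'$ satisfies $|tz'| > |tz|$, so $z'$ lies outside $D$, and the path is now "above" $z$ in the distance-to-$t$ ordering; since $\Theta$-routing never increases distance to $t$ and returning below $|tz|$ would again demand an obstacle avoidance phase through a still-further constraint, the path can never come back to $z$ as a terminal vertex.

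The third step is to assemble these observations into a clean contradiction: suppose $z$ is the final vertex of two distinct obstacle avoidance or opposite endpoint phases, say at times $\tau_1 < \tau_2$. Both phases are associated with $Q$ (as $z$ is $Q$'s near endpoint). Between $\tau_1$ and $\tau_2$ the path must, by the claim in Lemma~\ref{lem:Convergence}, traverse an obstacle avoidance phase on some $Q'$ with both endpoints outside $D$; hence at some moment strictly between $\tau_1$ and $\tau_2$ the path is at a vertex $z'$ with $|tz'| > |tz|$, i.e.\ outside $D$. To then return to $z \in \partial D$ at time $\tau_2$, the path must re-enter $D$; by fact (i) this cannot be accomplished during $\Theta$-routing (which only shrinks the distance once inside, and cannot move a point from outside to the boundary while also being the mechanism that ends a phase at $z$), nor during an opposite endpoint phase of $Q'$ (whose endpoints are both outside $D$), so it must be yet another obstacle avoidance phase, whose terminal endpoint — being the one \emph{closest} to $t$ among its constraint's endpoints and lying in $H_t$ — would itself have to be no further from $t$ than $z$ only if that constraint's near endpoint were inside $D$, contradicting the requirement that exit from $D$ was caused by endpoints outside $D$; chasing this forces the terminal vertex of that phase to lie strictly outside $D$, so it is not $z$. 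This exhausts the possibilities and yields the contradiction.

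The main obstacle I anticipate is the bookkeeping around the opposite endpoint phase: I must make sure that when the phase terminating at $z$ is an \emph{opposite} endpoint phase (i.e.\ we arrived at the far endpoint of $Q$ first and then routed to $z$), the subsequent behavior is still governed by the same disk argument — in particular that the opposite endpoint routing of~\cite{BFRV2017RoutingJournal}, which merely "stays close to the visibility edge" between $Q$'s endpoints, does not somehow sneak the path outside $D$ and back in without an intervening obstacle avoidance phase on a far constraint. Handling this cleanly likely requires noting that the visibility edge between $Q$'s two endpoints has its $z$-end inside $D$ and its other end outside, and that "staying close" plus the finiteness already granted by Lemma~\ref{lem:Convergence} suffices to treat the opposite endpoint phase as a black box that, once completed at $z$, hands control back to $\Theta$-routing with the path inside $D$. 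The rest is a careful but routine reprise of the disk-monotonicity argument from the convergence proof.
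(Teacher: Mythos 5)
Your argument breaks at its central claim, namely that once the path has left the disk $D$ it ``can never come back to $z$ as a terminal vertex.'' Nothing prevents the path from re-entering $D$: a $\Theta$-routing phase strictly \emph{decreases} the distance to $t$ at every step, so it freely carries the path from outside $D$ back inside, after which the path can again reach a vertex from which $Q$ blocks visibility to $t$ and trigger a fresh obstacle avoidance phase terminating at $z$. Your text inverts the monotonicity here (``returning below $|tz|$ would again demand an obstacle avoidance phase through a still-further constraint''): it is \emph{leaving} $D$, not entering it, that requires an avoidance phase on a constraint with far endpoints. The disk argument by itself therefore only yields the exponential bound already obtained in the proof of Lemma~\ref{lem:Convergence} --- the paper explicitly remarks immediately after that proof that a single constraint can trigger an obstacle avoidance phase many times --- so it cannot be ``pushed one level further'' to a bound of one without a new ingredient. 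A secondary gap: you assert that both phases ending at $z$ must be associated with $Q$, but $z$ may be the near endpoint of several constraints, so this also needs justification.

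The missing ingredient is determinism. Whenever an obstacle avoidance or opposite endpoint phase terminates at $z$, the algorithm is in an identical state: it sits at $z$, the phase's memory has been discarded, and it is about to execute a memoryless $\Theta$-routing step. If this happened twice, the algorithm would follow the same edge and, by induction, the same entire suffix of the path, hence cycle forever --- contradicting the finite termination already established in Lemma~\ref{lem:Convergence}. That observation is the paper's entire proof, and it sidesteps both of the difficulties above; the convergence lemma is used only as a black box guaranteeing termination, not through its internal disk construction.
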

\begin{proof}
When we reach $z$ at the end of an obstacle avoidance or opposite endpoint phase, we execute a step in the $\Theta_6$-routing strategy. Since this strategy is memoryless\footnote{An algorithm is called \emph{memoryless} if it makes the same decision when presented with the same input, i.e., it does not store previous decisions.}, the routing strategy follows the same edge from $z$ every time we reach it. This implies that $z$ cannot be visited twice using an obstacle avoidance or opposite endpoint phase, since otherwise the path would cycle indefinitely, contradicting Lemma~\ref{lem:Convergence}.
\end{proof}

This immediately gives a linear bound on the number of phase changes, implying a quadratic bound on the number of steps. We now use a more detailed analysis of the circumstances in which a vertex may be visited to tighten this further to $O(n)$. In order to do this, we first determine the number of constraints that can fully block visibility in a cone of a vertex $v$. 

\begin{lemma}
\label{lem:ThreeConstraints}
For any vertex $v$ there can be at most three constraints that fully block visibility in some cone of $v$. 
\end{lemma}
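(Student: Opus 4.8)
The statement is a purely geometric fact: fix a vertex $v$ and one of its six cones, say $C_i^v$; I want to bound by three the number of constraints that fully block visibility inside that cone (and hence conclude the same bound per cone, but the lemma as stated counts over \emph{some} cone, so really I need: across all six cones, at most three constraints in total fully block a cone of $v$ — re-reading, I believe the intended reading is "at most three constraints, each of which fully blocks visibility in some cone of $v$"). The natural approach is to suppose for contradiction that there are four such constraints $Q_1,Q_2,Q_3,Q_4$, and derive a contradiction with the fact that the $Q_j$ are pairwise non-crossing and have endpoints in $P$. I would first observe that if a constraint $Q$ fully blocks visibility in cone $C_i^v$, then $Q$ must intersect both bounding rays of $C_i^v$; in particular the portion of $Q$ inside $C_i^v$ is a chord separating $v$ from the far part of the cone, and $v$ cannot see past it. The key quantitative input is that cones have opening angle $2\pi/6 = \pi/3$, and a segment that crosses both rays of a $\pi/3$ cone makes a bounded angle with the bisector — this is exactly the kind of rigidity the paper keeps invoking about $\Theta_6$.

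Next I would use the non-crossing property to linearly order the blocking constraints within a single cone: if $Q$ and $Q'$ both cross both rays of $C_i^v$ and do not cross each other, then one of them (the chord closer to $v$) separates $v$ from the other, so "closer to $v$" is a total order on the constraints blocking that cone. The point is that only the \emph{closest} one actually matters for what $v$ sees, but the lemma wants to count \emph{all} of them. So the real content must be a charging/angle argument: each blocking constraint forces a point of $P$ (an endpoint of $Q$) to lie in a specific region near $v$, and these regions can accommodate only boundedly many such endpoints before two constraints are forced to cross. Concretely, an endpoint of a constraint blocking $C_i^v$ that does not itself lie in $C_i^v$ lies in one of the two adjacent cones $C_{i-1}^v$ or $C_{i+1}^v$ (because the segment exits through the two rays of $C_i^v$, a $\pi/3$ cone, so its endpoints are in the cones flanking it). Counting how these endpoints distribute over the six cones, together with the non-crossing condition, is what yields the number three rather than something larger.

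The main obstacle I anticipate is getting the counting tight — showing it is exactly three and not four or two. I expect the argument to go: look at the two constraints whose $C_i^v$-chords are the \emph{extreme} ones in the linear order (closest to $v$ and farthest among those still blocking), realize that a third and fourth constraint squeezed in between would have endpoints forced into overlapping wedges in such a way that two of the four segments must properly intersect, contradicting that $S$ is non-crossing. I would also need to handle the degenerate cases where an endpoint of a blocking constraint coincides with $v$ or lies exactly on a cone boundary, but general position rules the latter out, and $v$ being an endpoint of a constraint is covered by the subcone machinery. I would additionally check the case where the "blocking" happens across two different cones of $v$ (a constraint blocking $C_i$ versus one blocking $C_{i+3}$, the opposite cone), since those cannot interfere with each other at all — this is where the global "three over all cones" bound (as opposed to "three per cone") must come from some further observation, e.g. that a constraint blocking one cone also has endpoints pinning down its interaction with neighbouring cones, so at most three of the six cones can simultaneously be fully blocked, or that the relevant count is really per "side" of $v$. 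I would resolve which of these readings is intended by tracing how Lemma~\ref{lem:ThreeConstraints} is used downstream (to bound total phase changes), and structure the proof accordingly; most likely the cleanest route is the contradiction-by-crossing argument applied cone-by-cone, then a short observation that at most three distinct constraints can arise in total.
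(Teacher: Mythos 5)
Your plan assembles the right raw materials --- a blocking constraint must cross both rays of its cone, so its endpoints lie in the two flanking cones, and the contradiction must come from forcing two constraints to cross --- but it misses the one step that actually produces the number three, and it spends most of its effort on a sub-goal that is false. The within-a-single-cone count you propose to bound by a charging argument is unbounded: arbitrarily many pairwise non-crossing constraints can each cross both rays of the same cone (they simply nest, one chord behind the next), so no angle or charging argument will cap that number. The lemma does not need it capped: ``fully blocks visibility in a cone'' refers to the closest such constraint, the one behind which $v$ sees nothing and in front of which there is no visible vertex, so each of the six cones contributes at most one blocking constraint, and the entire content of the lemma is getting from six down to three.

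That reduction is the step your sketch gestures at (``endpoints forced into overlapping wedges so that two segments must intersect'') but never pins down, and it lives across \emph{adjacent} cones, not inside one cone. The paper's argument: suppose two adjacent cones $C_i$ and $C_{i+1}$ of $v$ are fully blocked by distinct constraints $Q_1$ and $Q_2$. Since $Q_1$ crosses both rays of $C_i$, one of its endpoints lies in $C_{i+1}$; since $C_{i+1}$ is fully blocked, that endpoint is not visible from $v$ and hence lies behind $Q_2$. Symmetrically, the endpoint of $Q_2$ lying in $C_i$ lies behind $Q_1$. These two containments force $Q_1$ and $Q_2$ to cross, contradicting that $S$ is non-crossing. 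Hence any two distinct blocking constraints are separated, in the cyclic order of cones around $v$, by at least one cone not blocked by a different constraint, and with six cones this yields at most three distinct constraints. Note also that your fallback formulation ``at most three of the six cones can be fully blocked'' is not the right statement either --- all six cones can be blocked simultaneously by three constraints, each blocking two adjacent cones --- so the plan as written would not compile into a proof without the adjacency argument above being supplied.
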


\begin{proof}
  There are six cones around $v$ and each one can only be fully blocked by at most one constraint. This already implies that at most six constraints can fully block visibility in some cone of $v$. In the following we reduce the number to three.

\begin{figure}[ht]
  \begin{center}
  \includegraphics{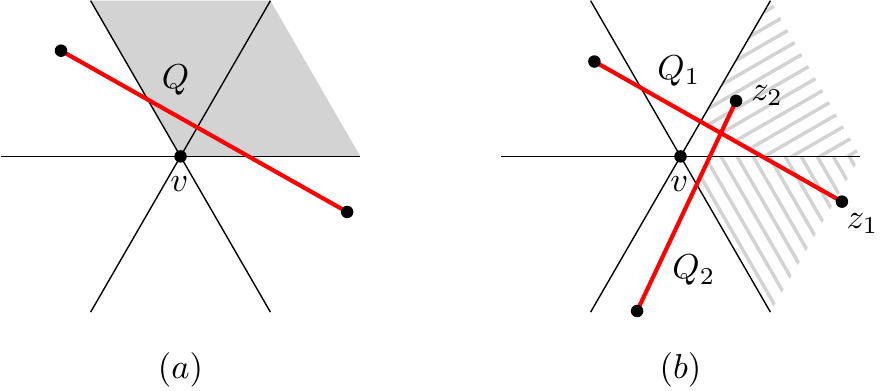}
  \end{center}
  \caption{(a) Constraint $Q$ fully blocks visibility in cone $C_0$ and $C_1$ of vertex $v$. Notice that its endpoints lie in $C_0$ and $C_2$. (b) There cannot be two adjacent cones fully blocked by different constraints: in order for this to happen, both endpoints should lie behind the other constraint (dashed regions) and this is only possible if the constraints cross.}
  \label{fig:FullyBlocking}
\end{figure}

  We first observe that if a constraint $Q$ fully blocks visibility in some cone(s), the endpoints of $Q$ must be in the cones adjacent to those blocked by it (say, if $Q$ blocks cones $C_0$ and $C_1$, then the endpoints of $Q$ must lie in $C_5$ and $C_2$, see Figure~\ref{fig:FullyBlocking}a). We use this fact to show that there cannot be two adjacent cones that are fully blocked by different constraints: assume, for the sake of contradiction, that we can have two constraints $Q_1$ and $Q_2$ fully blocking cones $C_1$ and $C_2$, respectively. Let $z_1$ be the endpoint of $Q_1$ that lies in $C_2$ and let $z_2$ be the endpoint of $Q_2$ that lies in $C_1$. Because both $C_1$ and $C_2$ are fully blocked, neither $z_1$ nor $z_2$ are visible from $v$. Thus, they must both lie behind the constraint blocking visibility in their cones. However, this would imply that $Q_1$ and $Q_2$ cross each other (see Figure~\ref{fig:FullyBlocking}b), contradicting that the set of constraints is plane. Naturally, the same argument applies to any other two adjacent cones, hence we conclude that between two different constraints that fully block visibility there is at least one cone with visible vertices. Since we have 6 cones in total, the limit of three constraints follows.
\end{proof}

\begin{lemma}
\label{lem:LinearSteps}
  The  constrained $\Theta_6$-routing strategy always reaches the destination in $O(n)$ steps. 
\end{lemma}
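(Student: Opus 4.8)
The plan is to prove the stronger statement that \emph{each vertex of $P$ is visited only a constant number of times} over the whole execution of the routing algorithm; since $|P|=n$, this gives the desired $O(n)$ bound on the number of steps immediately. I will bound, for a fixed vertex $v$, the number of visits to $v$ occurring in each of the three phase types separately and sum. For the $\Theta$-routing phases I claim $v$ is visited at most once while the algorithm is in $\Theta$-routing mode. The edge taken during a $\Theta$-routing step, and whether the algorithm then gets stuck and switches to an obstacle avoidance phase, depend only on the current vertex, its neighbourhood, and the destination $t$; moreover, once an obstacle avoidance phase is initiated at a vertex $u$, the entire cascade that follows (the obstacle avoidance phase, a possible opposite endpoint phase, and the vertex at which $\Theta$-routing resumes) is determined by $u$ and $t$. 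Hence, exactly as in the proof of Lemma~\ref{cor_visitonce}, two visits to $v$ in $\Theta$-routing mode would force all subsequent behaviour to repeat, contradicting the termination guarantee of Lemma~\ref{lem:Convergence}. In particular the total number of $\Theta$-routing steps is at most $n$, and since every obstacle avoidance phase is initiated at a vertex reached in $\Theta$-routing mode, no vertex initiates more than one obstacle avoidance phase.

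Next I bound the obstacle avoidance visits. Within a single obstacle avoidance phase the intersection of the right boundary of the relevant cone with the blocking constraint $Q$ moves monotonically toward the target endpoint (this is the invariant of Lemmas~\ref{lemma:Reaching the endpoint-1} and~\ref{lemma:Reaching the endpoint}), so $v$ is visited at most once per such phase; it remains to bound the number of obstacle avoidance phases visiting $v$. First, each constraint $Q$ is the blocking constraint of at most one obstacle avoidance phase: every such phase reaches the right endpoint of $Q$ and then causes $\Theta$-routing to resume at the endpoint of $Q$ closest to $t$, which by Lemma~\ref{cor_visitonce} happens at most once. Second, by the same invariant, every vertex visited during the obstacle avoidance phase blocked by $Q$ has $Q$ fully blocking visibility in one of its cones; combined with Lemma~\ref{lem:ThreeConstraints}, the blocking constraint of any obstacle avoidance phase visiting $v$ lies in a set of at most three constraints. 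Together these give $O(1)$ obstacle avoidance visits to $v$.

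Finally, for the opposite endpoint phases: each such phase immediately follows the unique obstacle avoidance phase blocked by some constraint $Q$, so it too is tied to a distinct constraint, and the visible-vertex routing strategy of~\cite{BFRV2017RoutingJournal} it uses visits each vertex at most once per phase. Arguing as before that this phase only traverses vertices for which $Q$ blocks a cone (the strategy stays close to the segment $Q$), Lemma~\ref{lem:ThreeConstraints} again bounds by $O(1)$ the number of opposite endpoint phases visiting $v$. Summing the three $O(1)$ bounds, $v$ is visited $O(1)$ times in total over the whole execution, so the routing path consists of $O(n)$ edges.

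The step I expect to be the main obstacle is the obstacle avoidance count, and specifically the claim that the blocking constraint fully blocks a cone of \emph{every} vertex it visits: in the general case of the obstacle avoidance phase the initiating vertex may merely have \emph{invalid} edges (rather than none) in the cone of $t$, and the phase may restart internally with a shifted reference cone, so one must verify this carefully (or replace ``fully blocks a cone'' by the weaker ``blocks the visible part of the canonical triangle'' and adapt the counting of Lemma~\ref{lem:ThreeConstraints} to that condition). Making the analogous ``$Q$ blocks a cone of every visited vertex'' statement precise for the opposite endpoint phase requires looking inside the routing strategy of~\cite{BFRV2017RoutingJournal}, and is the other point that needs care.
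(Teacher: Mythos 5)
Your decomposition (a memoryless argument for the $\Theta$-routing visits, Lemma~\ref{cor_visitonce} to tie each constraint to at most one obstacle-avoidance/opposite-endpoint cycle, and Lemma~\ref{lem:ThreeConstraints} to bound the number of blocking constraints per vertex) is essentially the paper's, but your headline claim---that every vertex is visited only $O(1)$ times over the whole execution---is false, and the paper exhibits the counterexample explicitly (Figure~\ref{fig:ObstacleManyVisits}): a vertex that is the far endpoint of $\Omega(n)$ constraints can be visited $\Omega(n)$ times. The step that breaks is your assertion that ``every vertex visited during the obstacle avoidance phase blocked by $Q$ has $Q$ fully blocking visibility in one of its cones.'' That is the invariant of Lemma~\ref{lemma:Reaching the endpoint-1} for the \emph{intermediate} vertices of the phase, but it cannot hold for the endpoints of $Q$ itself: a constraint incident to a vertex never blocks visibility from that vertex (it only splits a cone into subcones), so Lemma~\ref{lem:ThreeConstraints} says nothing about visits to $v$ in its role as an endpoint of the triggering constraint. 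The same hole reappears in your treatment of the opposite endpoint phase, which begins and ends at endpoints of $Q$. Since $v$ may be incident to $\Theta(n)$ constraints, each contributing one such visit, the per-vertex bound degenerates, and falling back on your per-phase counts (at most one cycle per constraint, each of length $O(n)$) only yields $O(n^2)$ steps.

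The missing idea is an amortized, global account of exactly these endpoint visits, which is the paper's case for $v$ being an endpoint of $Q$: each constraint triggers at most one obstacle-avoidance/opposite-endpoint cycle (your argument for this, via Lemma~\ref{cor_visitonce}, is correct), each such cycle contributes $O(1)$ visits to vertices \emph{qua} endpoints of that constraint, and since the constraint set is plane it has at most $3n-6$ segments; hence these visits total $O(n)$ over the entire execution even though they may all land on the same vertex. Combining this global $O(n)$ term with the per-vertex $O(1)$ bounds for all other kinds of visits gives the lemma. (A minor further inaccuracy: an obstacle avoidance phase need not be initiated at a vertex reached in $\Theta$-routing mode---it can start immediately at the terminal vertex of the previous phase---but this does not affect the count materially. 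The caveats you flag about invalid edges and the restarted phase are real but secondary; the endpoint accounting is the essential gap.)
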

\begin{proof}
Consider any vertex $v$ and consider how we reached it. 

\begin{description}
\item[1) $\boldsymbol{v}$ is reached during a $\boldsymbol{\Theta}$-routing phase.] Since the routing strategy in this phase is memoryless, we would make the same routing step from $v$ every time we reach it. In particular, this would imply that $v$ cannot be visited twice using a $\Theta$-routing phase (otherwise, the path would cycle indefinitely, contradicting with Lemma~\ref{lem:Convergence}). Hence, we conclude that $v$ is visited once during a $\Theta$-routing phase during the whole routing algorithm.

\item[2) $\boldsymbol{v}$ is reached during an avoidance phase of constraint $\boldsymbol{Q}$.] We consider two subcases:

\begin{description}
\item[2.1) $\boldsymbol{v}$ is not an endpoint of $\boldsymbol{Q}$.] Let $u$ be the vertex that initiated the avoidance phase and first consider the case in which $Q$ completely blocks visibility of $u$ in the cone containing $t$ (see Figure~\ref{fig:ThetaRoutingLongStep}b). In this situation, the same cone remains empty for all vertices along the path (including $v$). By Lemma~\ref{lem:ThreeConstraints}, if $v$ is visited more than three times as part of an obstacle avoidance path, two of them share the same cone. Both of these times, the obstacle avoidance and opposite endpoint phases would end up at $z$, the endpoint of $Q$ closest to $t$, contradicting Lemma~\ref{cor_visitonce}. Thus, we conclude that $v$ can be reached this way at most three times.

It is possible that $Q$ did not block the visibility in the cone completely (i.e., we initiated the obstacle avoidance phase because the edge was invalid, see Figure~\ref{fig:ThetaRoutingLongStep}a). This situation is very similar to the case in which visibility was completely blocked. The only difference is that the choice of the edge we follow at $v$ depends on the cone that contained $t$ when we started this obstacle avoidance phase as well as on whether or not $v$ has edges in the two adjacent cones. We again conclude that if $v$ is visited more than a constant number of times in this way, the algorithm would route to the same neighbour of $v$, eventually ending at the same endpoint of $Q$ and contradicting Lemma~\ref{cor_visitonce}. 

\item[2.2) $\boldsymbol{v}$ is an endpoint of $\boldsymbol{Q}$.] As argued in Lemma~\ref{cor_visitonce}, $v$ can only be visited once during the whole execution of the algorithm if it is the endpoint that is closest to $t$. Similarly, if $v$ is the endpoint that is furthest away from $t$, we know the algorithm enters the opposite endpoint phase and routes to the opposite endpoint of $Q$. Note that $v$ could be visited several times this way (see Figure~\ref{fig:ObstacleManyVisits}). However, notice that $v$ can never be visited twice because of the same constraint $Q$, as this would imply that we visit the same closest endpoint twice as well, contradicting Lemma~\ref{cor_visitonce}. Thus, during the entire execution of the algorithm, we can visit at most $3n-6$ vertices as the endpoint of a constraint that is not closest to $t$, since the set of constraints is plane. 
\end{description}

\begin{figure}[ht]
  \begin{center}
  \includegraphics{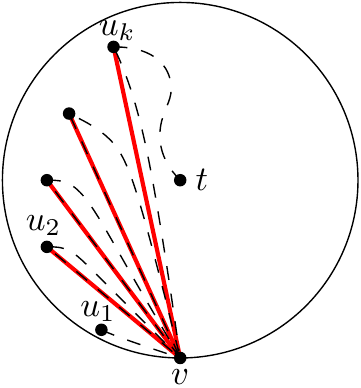}
  \end{center}
  \caption{A vertex $v$ can be visited $\Omega(n)$ times as the endpoint not closest to $t$. This implies that $v$ is the endpoint of many constraints and in all of them it is further away from $t$ than the other endpoint $u_2$, ..., $u_k$. For clarity, the disk centred at $t$ passing through $v$ is drawn (as solid black), and a possible routing path that visits $v$ multiple times is also shown (in dashed black).}
  \label{fig:ObstacleManyVisits}
\end{figure}

\item[3) $\boldsymbol{v}$ is reached during an opposite endpoint phase.] Every time a vertex is part of a path in the opposite endpoint phase, Lemma 3 of \cite{BFRV2017RoutingJournal} shows that at least one of its cones is empty. 
\end{description}

Hence, excluding case 2.2, each vertex is visited a constant number times. Since case 2.2 adds at most $3n-6$ visited vertices during the entire execution of the algorithm, this implies that a total of $O(n)$ steps are executed as claimed. 
\end{proof}

\begin{theorem}
\label{theo:RoutingTheta6}
  There exists a 1-local $O(1)$-memory routing algorithm for the constrained $\Theta_6$-graph that reaches the destination in $O(n)$ steps. 
\end{theorem}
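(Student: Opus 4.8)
The plan is to take the \emph{constrained $\Theta_6$-routing strategy} of Section~\ref{sec:GlobalStrategy} as the claimed algorithm and verify the two quantitative requirements: that it delivers every message in $O(n)$ steps, and that every step can be computed from $1$-local information plus $O(1)$ carried memory. Throughout, $N(s)$ denotes the neighbours of the current vertex $s$ in the (given) constrained $\Theta_6$-graph together with the constraints of $S$ incident to $s$, as in our routing model.

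The step bound is already in hand: Lemma~\ref{lem:LinearSteps} states exactly that the constrained $\Theta_6$-routing strategy reaches $t$ in $O(n)$ steps (building on Lemmas~\ref{lem:Convergence},~\ref{cor_visitonce}, and~\ref{lem:ThreeConstraints}), so for this half of the theorem it suffices to invoke that lemma.

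The substance of the proof is therefore the locality and memory accounting, which I would carry out phase by phase. In a $\Theta$-routing step at $s$: the cone of $s$ containing $t$, its subdivision into subcones, the closest vertex in the relevant subcone, and the validity test against the canonical triangle $\triangle_{st}$ all depend only on $s$, $t$, and $N(s)$; likewise the trigger for switching phases (``no valid edge in the cone of $t$'') is a predicate on the same data. In an obstacle-avoidance step at $m$: the candidate vertices $v$ (closest visible vertex in the appropriate subcone of $C^m_2$) and $w$ (minimising the angle with the right boundary of $C^m_0$) are chosen from $N(m)$, and the tie-break --- whether $v\in C^w_4$ and whether $m$ is an endpoint of a constraint meeting the interior of triangle $mvw$ --- again uses only $m$, the positions of its neighbours, and the constraints incident to $m$. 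To recognise the end of this phase we carry the initiating vertex $u$ (and, as always, $t$), so that at each $m$ we can test whether some constraint incident to $m$ crosses $\overline{ut}$; in the general case of Lemma~\ref{lemma:Reaching the endpoint} we additionally carry a constant-size flag recording that a restart has occurred and $C_1$ now plays the role of $C_0$. In an opposite-endpoint step we simply run the $1$-local $O(1)$-memory algorithm of~\cite{BFRV2017RoutingJournal} between the two mutually visible endpoints of the blocking constraint $Q$; when we first arrive at the right endpoint $z$ of $Q$ we see $Q$ in $N(z)$, hence both of its endpoints, compare their distances to $t$, and either resume $\Theta$-routing or store the other endpoint as the target of an opposite-endpoint phase. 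Altogether the message carries $t$, a constant-size phase indicator, and (only while outside the $\Theta$-routing phase) the vertex $u$ or the target endpoint, the restart flag, and the $O(1)$ state of the sub-routine of~\cite{BFRV2017RoutingJournal} --- a constant number of units in total.

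The point needing the most care is this last accounting: checking that \emph{each} decision a phase must make in order to take its next step --- subcone structure, canonical-triangle validity, the $v$/$w$ choice and its tie-break, and detecting arrival at an endpoint of the blocking constraint --- can be evaluated purely from $s$, $t$, $N(s)$ and the constant-size carried state, with no reference to vertices more than one hop away and no reliance on the history of the walk beyond what the stored points and the flag encode.
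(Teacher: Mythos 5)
Your proposal is correct and follows essentially the same route as the paper: the $O(n)$ bound is delegated to Lemma~\ref{lem:LinearSteps}, and the rest of the argument is a phase-by-phase check that each decision is $1$-local and that each phase carries only a constant number of stored points (the paper states this more tersely, noting only that the obstacle avoidance and opposite endpoint phases each store a single vertex that is discarded when the phase ends). Your more explicit accounting of the carried state is a faithful elaboration of the same argument, not a different one.
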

\begin{proof}
  The algorithm is 1-local by construction, since we consider only information about vertices the current vertex is connected to. The $\Theta$-routing phase does not require any memory. The obstacle avoidance phase and opposite endpoint phase store a single vertex each and this information is discarded when the phase ends. Hence, the algorithm requires $O(1)$ memory. Lemma~\ref{lem:LinearSteps} shows that the algorithm terminates in $O(n)$ steps. 
\end{proof}

\section{Routing on the Visibility Graph}
We now return our attention to our main goal: routing on the visibility graph. Since in the previous section we presented a routing algorithm for the constrained $\Theta_6$-graph, we first show that we can use this algorithm to route on the visibility graph as well. Afterwards, we also describe how to modify the constrained $\Theta_6$-routing algorithm to route on the visibility graph directly without locally determining the edges of the constrained $\Theta_6$-graph. 

We note that, unfortunately, the length of the paths resulting from these two approaches need not be related to the length of the shortest path in the visibility graph. Since we cannot determine locally which endpoint of a constraint is closest to $t$, the routing algorithms may follow a path to an endpoint arbitrarily far away, preventing us from being competitive. However, we show in Section~\ref{sec:LowerBounds} that no deterministic local routing strategy can be $o(\sqrt{n})$-competitive with respect to the length of the shortest path. 

\subsection{Using the Constrained $\boldsymbol{\Theta_6}$-Graph}
In order to use the constrained $\Theta_6$-routing algorithm from the previous section, we need to determine locally at a vertex which of its visibility edges are part of the constrained $\Theta_6$-graph. Since it is easy to locally determine at a vertex $u$ if a vertex $v$ is the closest vertex in one of its subcones, we focus on the situation where this is not the case and we thus have to determine at $u$ if it is the closest vertex in one of the subcones of $v$. Let the \emph{constrained canonical triangle} of $v$ be $\triangle_{vu}$ clipped using the constraints intersecting the boundary of the canonical triangle with one endpoint at $u$ (see Figure~\ref{fig:ConstrainedCanonicalTriangle}). Note that we can determine the constrained canonical triangle of $v$ locally at $u$. 

\begin{figure}[ht]
  \begin{center}
    \includegraphics{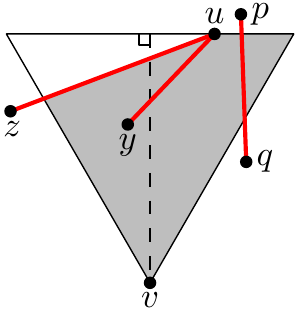}
  \end{center}
  \caption{The constrained canonical triangle of $v$ (gray). Constraint $u z$ is used to clip the triangle. Constraint $u y$ does not clip the triangle, since it does not cross the triangle boundary. Constraint $p q$ does not clip the triangle, since it has no endpoint at $u$.}
  \label{fig:ConstrainedCanonicalTriangle}
\end{figure}

\begin{lemma}
  \label{lem:Theta6Edges}
  Let $u$ and $v$ be two vertices such that $v$ is not the closest vertex to $u$ in any subcone of $u$. Edge $u v$ is part of the constrained $\Theta_6$-graph if and only if $u$ does not have any visible vertices in the constrained canonical triangle of $v$. 
\end{lemma}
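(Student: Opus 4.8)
The plan is to unfold the definition of the constrained $\Theta_6$-graph and show the two directions of the equivalence, using the constrained canonical triangle as a local proxy for the subcones of $v$ that $u$ could belong to.

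First I would set up notation: since $v$ is not the closest vertex to $u$ in any subcone of $u$, the only way $uv$ can be an edge of the constrained $\Theta_6$-graph is if $u$ is the closest vertex to $v$ (along the bisector of the cone $C_i^v$ containing $u$) in the subcone $C_{i,j}^v$ of $v$ that contains $u$. So the statement reduces to: $u$ is the closest such vertex in its subcone of $v$ if and only if $u$ has no visible vertex in the constrained canonical triangle of $v$. The key observation is that the constrained canonical triangle, by its definition (take $\triangle_{vu}$ and clip by every constraint with an endpoint at $u$ that crosses the canonical-triangle boundary), is exactly the set of points $w$ in the cone $C_i^v$ that (a) project onto the bisector of $C_i^v$ no farther than $u$ does — that is the role of $\triangle_{vu}$ being bounded by the line through $u$ perpendicular to the bisector — and (b) lie in the same subcone $C_{i,j}^v$ as $u$ — that is the role of the clipping, since the subcone boundaries at $v$ are the lines through constraints incident to $v$, but the only such constraints that can enter $\triangle_{vu}$ and separate $u$ from part of the triangle are precisely the ones with an endpoint at $u$ (any constraint separating $v$ from interior points of $\triangle_{vu}$ without an endpoint at $u$ would have to cross an edge of $\triangle_{vu}$, and since it cannot cross $vu$, it would cross one of the other two sides, but then it would also block $u$ from $v$ unless it has $u$ as an endpoint — this is the point I would need to argue carefully using planarity of $S$ and visibility of $uv$).

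For the forward direction ($uv$ an edge $\Rightarrow$ no visible vertex in the constrained canonical triangle): if $uv \in \Vis(P,S)$ and $uv$ is an edge of the constrained $\Theta_6$-graph but not because $v$ is closest in a subcone of $u$, then $u$ is the closest vertex to $v$ in subcone $C_{i,j}^v$. Any vertex $w$ in the constrained canonical triangle is by the above in $C_{i,j}^v$ with $|vw'| \le |vu'|$; if in addition $w$ were visible to $u$, I claim $w$ is visible to $v$ — here I would invoke Lemma~\ref{lem:ConvexChain} or a direct argument: the constrained canonical triangle has been clipped precisely so that no constraint separates $v$ from it, so any point of it visible to $u$ is reachable from $v$ without crossing a constraint. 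That would make $w$ a closer vertex than $u$ in $C_{i,j}^v$ (by general position, strictly), contradicting that $uv$ is an edge. For the reverse direction, suppose $u$ has no visible vertex in the constrained canonical triangle of $v$. Then in particular there is no vertex visible to $v$ in $C_{i,j}^v$ that projects closer to the bisector than $u$ (any such vertex would lie in the constrained canonical triangle, and being visible to $v$ and inside that unobstructed region it would also be visible to $u$, contradiction). Hence $u$ is the closest vertex visible to $v$ in that subcone, so $uv$ is an edge.

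**The main obstacle** I anticipate is pinning down the exact geometric claim that the constrained canonical triangle of $v$ — as clipped only by constraints with an endpoint at $u$ — coincides with $\{w \in C_{i,j}^v : |vw'| \le |vu'|\}$ restricted to the region unobstructed from both $u$ and $v$, and in particular that "visible from $u$ and inside the constrained canonical triangle" is equivalent to "visible from $v$ and inside the subcone and projecting no farther than $u$." The subtlety is that a constraint could obstruct part of $\triangle_{vu}$ from $v$'s viewpoint without having an endpoint at $u$; I would need to show that any such constraint, since it cannot cross $vu$, either lies entirely outside $\triangle_{vu}$ or has an endpoint inside it, and then use Lemma~\ref{lem:ConvexChain} (applied with $w = v$, or with $w$ a point on such a constraint) to route around it and conclude that the obstructed points are never the relevant closest vertices — essentially that clipping by $u$-incident constraints alone is without loss of generality. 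Getting this case analysis clean, rather than hand-waved, is where most of the work lies; the rest is a straightforward unwrapping of the constrained $\Theta_6$ edge definition.
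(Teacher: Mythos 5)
Your overall skeleton matches the paper's: reduce the question to whether $u$ is the closest visible vertex to $v$ in the subcone of $v$ containing $u$, then argue the two directions. But the load-bearing step in both of your directions is the claim that, inside the constrained canonical triangle, visibility from $u$ and visibility from $v$ coincide (``clipped precisely so that no constraint separates $v$ from it'', ``that unobstructed region''). This is false: the triangle is clipped only by constraints with an endpoint at $u$, and a constraint with no endpoint at $u$ or $v$ --- for instance one whose endpoints are vertices lying inside $\triangle_{vu}$, or one crossing two sides of $\triangle_{vu}$ without crossing the chord $uv$ --- can block $v$'s view of a vertex that $u$ sees, and vice versa. So your witness $w$ (visible to $u$) need not be visible to $v$, and your witness $x$ (visible to $v$ and closer than $u$) need not be visible to $u$. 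The fix, which is the actual content of the paper's proof, is to \emph{not} insist that the same vertex be visible from both endpoints, but to manufacture a different one via Lemma~\ref{lem:ConvexChain}: in the forward direction one must take $x$ to be the visible vertex minimizing the angle between $ux$ and $uv$ --- this extremal choice is exactly what guarantees that $u$ is not the endpoint of a constraint crossing the interior of $uvx$, so the lemma applies with apex $u$ and produces \emph{some} vertex visible to $v$ inside the triangle; in the reverse direction the lemma is applied with apex $v$ (valid because $u$ and $x$ lie in the same subcone of $v$, so no constraint at $v$ enters $uvx$) and produces some vertex visible to $u$. You mention Lemma~\ref{lem:ConvexChain} only as a fallback and never identify this extremal choice, while your primary justification rests on the false ``unobstructed'' premise.

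A second, smaller gap: your ``key observation'' identifies the constrained canonical triangle with $\{w\in C^v_{i,j} : |vw'|\le|vu'|\}$, i.e., you claim that clipping by $u$-incident constraints reproduces the subcone boundaries of $v$. It does not --- subcone boundaries of $v$ come from constraints incident to $v$, which are unrelated to the clipping constraints. The paper instead disposes of the subcone by a separate two-case argument: if the subcone-defining constraint of $v$ ends inside the constrained canonical triangle, that endpoint already witnesses both sides of the equivalence; otherwise the portion of the triangle outside the subcone is invisible to $u$ and therefore irrelevant. Your parenthetical sketch conflates separating $u$ from part of the triangle with separating $v$ from part of it, and would not go through as written.
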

\begin{proof}
  We first note that we can consider the subcone of $v$ that contains $u$ to be the full cone: If the constraint defining the subcone ends in the constrained canonical triangle, Lemma~\ref{lem:ConvexChain} implies that it also contains a vertex visible to $u$, correctly implying that $u v$ is not an edge. If the constraint does not end in the constrained canonical triangle, the part of the constrained canonical triangle outside the subcone is not visible to $u$ and hence it does not influence the decision at $u$. 

  It is easy to see that if $u$ has any visible vertices in the constrained canonical triangle of $v$, $u v$ is not an edge of the constrained $\Theta_6$-graph: Consider the vertex $x$ such that the smaller angle of $u x$ and $u v$ is minimized. Since the angle is minimized, $u$ is not the endpoint of any constraints intersecting triangle $uvx$, so we can apply Lemma~\ref{lem:ConvexChain} to $uvx$. This gives us a vertex inside the constrained canonical triangle that is visible to $v$. Hence, $u$ is not the closest visible vertex to $v$ and thus $u v$ is not an edge of the constrained $\Theta_6$-graph. 
  
  Next we show that if $u$ has no visible vertices in the constrained canonical triangle of $v$, $u v$ is an edge of the constrained $\Theta_6$-graph. We prove this by contradiction, so assume that $u v$ is not an edge of the constrained $\Theta_6$-graph. This implies that there exists a vertex $x$ in the subcone of $v$ that contains $u$ that is closer to $v$ than $u$ is. Hence, $x$ lies in the constrained canonical triangle. Applying Lemma~\ref{lem:ConvexChain} to $u v x$ gives us a vertex inside the constrained canonical triangle that is visible to $u$, contradicting that $u$ has no visible vertices in this region.  
\end{proof}

\subsection{Routing Directly on the Visibility Graph}
In order to route directly on the visibility graph, instead of at each vertex computing the local neighbourhood in the constrained $\Theta_6$-graph, the constrained $\Theta_6$-routing algorithm needs to be modified. We do this in such a way that the vertices do not need to store any fixed cone orientations.

When a vertex $s$ wants to send a message, it picks an arbitrary cone orientation and stores it in the message it sends. We note that a vertex can pick a different orientation of the cones for each message that it sends and this only requires a constant amount of storage. Since the orientation is stored in the message, vertices do not need to agree on a fixed orientation in advance, as every vertex along the routing path can extract the orientation from the message and use that for its decisions. 

Like in the constrained $\Theta_6$-routing algorithm, routing directly on the visibility graph works in three phases: $\Theta$-routing, obstacle avoidance, and opposite endpoint. During the $\Theta$-routing phase a vertex $u$ simply sends the message to the closest vertex in the cone that contains $t$, again limiting the edges it is allowed to follow to the edges that end in $\triangle_{ut}$. 

During the obstacle avoidance phase, we start by routing to either endpoint of the constraint blocking visibility to $t$. Since we are routing on the visibility graph, Lemma~\ref{lem:ConvexChain} tells us that there is a convex chain of visibility edges to these endpoints. Hence, in order to reach an endpoint of the constraint, we follow one of these convex chains. In order to determine the next edge on the chain at an intermediate vertex $m$, the message needs to store the predecessor of $m$ on the chain and whether the path should continue to the next clockwise or counter-clockwise edge of $m$. The next edge along the convex chain at $m$ is the edge that minimizes the angle with the line through $m$ and the predecessor of $m$ in the stored direction. 

When we arrive at an endpoint of a constraint, we can determine the location of the other endpoint, since they are connected in the visibility graph. Using this information, we can determine if this constraint is the one that caused the obstacle avoidance phase by checking if it blocks visibility of $u$ to $t$. If this is the case, we also determine which of the two endpoints is closer to $t$. If we are not yet at the endpoint closest to $t$, we start the opposite endpoint phase, which is now simplified to following the edge in the visibility graph to the other endpoint of the constraint. 

\begin{theorem}
\label{theo:RoutingVisibilityGraph}
  There exists a 1-local $O(1)$-memory routing algorithm for the visibility graph that reaches the destination in $O(n)$ steps. 
\end{theorem}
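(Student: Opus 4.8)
\textbf{Proof proposal for Theorem~\ref{theo:RoutingVisibilityGraph}.}
The plan is to mirror the structure of the proof of Theorem~\ref{theo:RoutingTheta6}, verifying the three required properties (1-locality, $O(1)$ memory, and $O(n)$ steps) for the modified algorithm described just above the statement. First I would check 1-locality: every decision made by the modified algorithm -- following the closest vertex in the cone containing $t$ during $\Theta$-routing, choosing the angle-minimizing next edge of a convex chain during obstacle avoidance, and stepping to the opposite endpoint of a constraint -- uses only the current vertex, its visible neighbours, the incident constraints, and the destination. In particular, when we reach an endpoint of a constraint, the other endpoint is a neighbour in $\Vis(P,S)$, so we can locally recover the constraint and test whether it blocks visibility from the stored vertex $u$ to $t$. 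Thus no information beyond $N(s)$ is consulted. Then I would bound the memory: the cone orientation is a constant number of bits; $\Theta$-routing needs nothing more; the obstacle avoidance phase stores the initiating vertex $u$, the predecessor of the current vertex on the convex chain, and one bit for the turning direction; the opposite endpoint phase stores only the relevant constraint (two points). All of this is $O(1)$ units of memory, discarded when a phase ends.

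The substantive part is the $O(n)$ step bound, and here the idea is that the modified algorithm makes the \emph{same sequence of high-level decisions} as the constrained $\Theta_6$-routing strategy of Theorem~\ref{theo:RoutingTheta6}, so the analysis of Lemma~\ref{lem:LinearSteps} carries over. Concretely, the $\Theta$-routing phase is literally unchanged (following the closest vertex in the cone of $t$, restricted to $\triangle_{ut}$, is exactly $\Theta$-routing, since the constrained $\Theta_6$-graph edge in a subcone is by definition the closest visible vertex). For obstacle avoidance, I would argue that walking a convex chain of visibility edges to an endpoint of the blocking constraint $Q$ -- whose existence is guaranteed by Lemma~\ref{lem:ConvexChain} -- is an alternative implementation of ``reach the endpoint of $Q$'': it still monotonically approaches that endpoint (the chain is convex and $x$-monotone in the appropriate rotated frame), hence uses at most $n$ steps, and it still terminates at an endpoint of a constraint crossing $\overline{ut}$. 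The opposite endpoint phase now follows a single visibility edge, so it is one step rather than the routine of~\cite{BFRV2017RoutingJournal}, which only helps. Therefore each cycle of the three phases still costs $O(n)$ steps.

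It then remains to re-run the counting of phase alternations. Lemma~\ref{lem:Convergence}'s termination argument applies verbatim since the algorithm is still memoryless at constraint endpoints up to the stored phase data, so once $t$ is reached as the closest endpoint of a constraint it is never revisited via an obstacle-avoidance/opposite-endpoint phase (Lemma~\ref{cor_visitonce}); Lemma~\ref{lem:ThreeConstraints} is a purely geometric fact about the six cones and is unaffected; and the case analysis of Lemma~\ref{lem:LinearSteps} -- each vertex is visited $O(1)$ times except as the far endpoint of a constraint, contributing a total of at most $3n-6$ such visits because $S$ is plane -- goes through unchanged, because it only ever used that the high-level phase structure is the one described, not the internal mechanics of each phase. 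Combining, the modified algorithm reaches $t$ in $O(n)$ steps, completing the proof.

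\textbf{Anticipated main obstacle.} The delicate point is justifying that following a convex chain of visibility edges during obstacle avoidance is ``as good as'' the $\Theta_6$ obstacle-avoidance procedure for the purposes of the counting arguments: I must confirm that the invariants those lemmas rely on (a cone staying empty along the path, the boundary of $C_0$ advancing toward the target endpoint, and the fact that the phase ends precisely at an endpoint of a constraint crossing $\overline{ut}$) are still maintained when the path is a convex chain rather than the specific edges chosen in Lemma~\ref{lemma:Reaching the endpoint}. I expect this to reduce to the monotonicity and emptiness properties of the convex chain furnished by Lemma~\ref{lem:ConvexChain}, but it is the step that needs the most care, together with checking that ``which endpoint of the blocking constraint we initially head toward'' does not create extra phase alternations beyond those already accounted for by the $D$-disk argument in Lemma~\ref{lem:Convergence}.
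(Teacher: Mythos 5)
Your treatment of locality and memory matches the paper's, and your observation that the opposite-endpoint phase collapses to a single edge is exactly what the paper uses. The gap is in the step-count argument, and it sits precisely at the point you flag as your ``anticipated main obstacle'' but do not resolve. Your claim that the case analysis of Lemma~\ref{lem:LinearSteps} ``only ever used that the high-level phase structure is the one described, not the internal mechanics of each phase'' is not correct: case~2.1 of that proof relies on two facts internal to the $\Theta_6$ obstacle-avoidance procedure, namely (a) the invariant that the blocking constraint keeps a specific cone of \emph{every} intermediate vertex $v$ empty, which is what allows Lemma~\ref{lem:ThreeConstraints} to cap at three the number of distinct constraints whose avoidance paths can pass through $v$, and (b) the fact that the edge followed at $v$ is a function only of $v$'s neighbourhood and the identity of that empty cone, so two visits sharing the empty cone must end at the same endpoint $z$ and hence contradict Lemma~\ref{cor_visitonce}. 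Neither transfers automatically to the convex-chain walk: an intermediate vertex of a chain furnished by Lemma~\ref{lem:ConvexChain} need not have a cone fully blocked by $Q$, and the step taken at $v$ now depends on the stored predecessor and turn direction, so two visits to $v$ with identical local neighbourhoods can continue differently. Re-running the counting therefore requires new invariants that you have not supplied.

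The paper avoids this entirely with a different, and shorter, argument: it shows that the routing path on the visibility graph is a \emph{subpath} of the constrained $\Theta_6$-routing path that uses the same cone orientation. The $\Theta$-routing steps coincide; the convexity of the obstacle-avoidance chain implies it visits a subset of the vertices visited by the $\Theta_6$ obstacle-avoidance phase; and the single-edge opposite-endpoint step is a subpath of its counterpart from~\cite{BFRV2017RoutingJournal}. The $O(n)$ bound is then inherited wholesale from Lemma~\ref{lem:LinearSteps} without re-verifying any per-vertex invariants. To repair your version you would either need to prove the subset/subpath relationship yourself, or establish directly that each vertex appears on convex-chain avoidance paths of only $O(1)$ distinct constraints; as written, the central bound is asserted rather than proved.
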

\begin{proof}
  We first note that locality follows from the fact that we only need to consider the neighbours of the current vertex in each of the steps. The memory bound follows from the fact that we need to store only the orientation of the cones in the message, as well as the starting vertex of the obstacle avoidance phase and the previous vertex along the obstacle avoidance path. 
  
  It remains to bound the number of steps. This algorithm has properties similar to those of the constrained $\Theta_6$-routing algorithm. First, the $\Theta$-routing phase always gets closer to the destination and thus cannot repeat vertices. This implies that Lemma~\ref{lem:Convergence} also holds for this routing algorithm. This in turn implies that a vertex can be the closest endpoint of an obstacle avoidance or opposite endpoint phase at most once. Next, since the obstacle avoidance path is convex, this implies that this path visits a subset of the vertices visited by the obstacle avoidance phase of the constrained $\Theta_6$-routing algorithm. Finally, the opposite endpoint phase consists of at most a single edge, hence this phase too is a subpath of its constrained $\Theta_6$-routing counterpart. Hence, when we compare the path of this routing algorithm to the constrained $\Theta_6$-routing path that uses the same cone orientation, the routing path on the visibility graph is a subpath of the constrained $\Theta_6$-routing path. Hence, it takes at most $O(n)$ steps. 
\end{proof}

\section{Lower Bounds}
\label{sec:LowerBounds}
In this section we provide a number of lower bounds on the competitive ratio of any deterministic local routing algorithm on the visibility graph compared to the shortest path in that graph. We give bounds both on the total length of the path, and on the number of steps taken. Our first two lower bounds are adaptations of the lower bound given by Bose~\etal~\cite{BFRV2017RoutingJournal} in Theorem~1. Their focus of interest is the constrained $\Theta_6$-graph, where they showed that no deterministic 1-local routing algorithm on this graph can be $o(\sqrt n)$-competitive. In this section, we modify their construction for the visibility graph instead.

We start by giving an overview of their bound. For a given $n$, the general idea is to initially construct a plane graph with a number of vertices quadratic in $n$. This graph is grid shaped, with the source at the bottom and sink at the top (see Figure~\ref{fig:RoutingLowerBoundGrid}). We run the routing algorithm and see which path it follows on the large graph. Note that the graph is of quadratic size, but the shortest path uses a linear number of edges. Thus, competitive algorithms cannot afford to fully explore the graph. Once we know the path of the routing strategy, we trim the graph to one of linear size in $n$, removing the portions of the graph that the routing algorithm did not explore (and in the process we insert a shorter path, see Figure~\ref{fig:RoutingLowerBoundGraph}). The key point of the construction is that the routing algorithm has the exact same information in both the original and the trimmed graph. Since the algorithm is deterministic, it will make the same choices and follow the same path in both cases.

\begin{figure}[h]
  \begin{center}
    \includegraphics{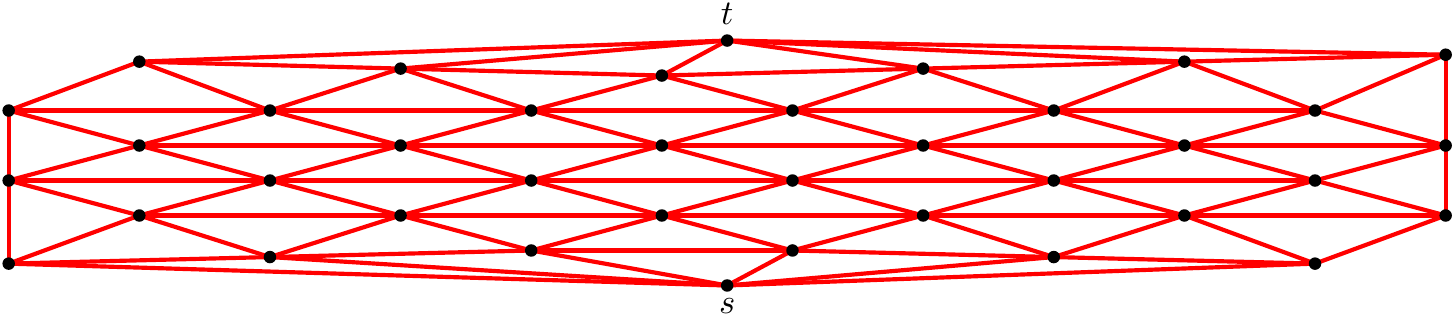}
  \end{center}
  \caption{The initial graph for the lower bound construction. The edges shown are the ones that would be created in the $\Theta_6$-graph. The same graph can be realized as a visibility graph by simply turning every edge into a constraint.}
  \label{fig:RoutingLowerBoundGrid}
\end{figure}

We start by showing a lower bound on the competitive ratio with respect to the {\em number of steps} in the path (often referred to as the {\em hop distance}) as opposed
to the length of the shortest path. 

\begin{lemma}
\label{lem:LinearStepsLowerBound}
  No deterministic 1-local routing algorithm is $o(n)$-competitive with respect to the number of steps of the shortest path on all pairs of vertices of the visibility graph on $n$ vertices, regardless of the amount of memory it is allowed to use. 
\end{lemma}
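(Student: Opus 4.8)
The plan is to adapt the grid construction from Bose~\etal~\cite{BFRV2017RoutingJournal} described in the overview just above, but to control the \emph{hop distance} rather than Euclidean length. First I would build an initial graph on $\Theta(n^2)$ vertices arranged in a grid-like pattern between a source $s$ at the bottom and a destination $t$ at the top, realized as a visibility graph by turning each grid edge into a constraint (as noted in the caption of Figure~\ref{fig:RoutingLowerBoundGrid}); the crucial structural feature is that from $s$ the destination $t$ is \emph{not} visible, so any local algorithm must traverse the grid one ``row'' at a time, and each vertex locally looks identical in the sense that its neighbourhood gives no hint which lateral direction leads to a shorter route. Because the algorithm is deterministic, $1$-local, and $O(1)$-memory, when run on this graph it follows one specific path $\pi$; I would observe that $\pi$ has hop length $\Omega(n^2)$ simply because to get from the bottom row to the top row it must pass through $\Omega(n)$ intermediate rows, and within each traversal it is forced to wander $\Omega(n)$ vertices sideways before it can ``drop down'' to the next row (this sideways forcing is exactly what the original construction engineers via the placement of constraints).

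Next, having recorded $\pi$, I would trim the graph: keep only the vertices and constraints that $\pi$ actually encountered (i.e. the union of the closed neighbourhoods of vertices on $\pi$), which is a set of size $O(n^2)$ still, but then \emph{also} I would splice in a short ``shortcut'' path of $O(n)$ hops from $s$ to $t$ through the part of the grid that $\pi$ never visited. The key invariant to verify, exactly as in the sketch preceding the lemma, is that at every vertex of $\pi$ the local information ($N(s)$ together with incident constraints, plus the $O(1)$ memory) is \emph{unchanged} between the original big graph and the trimmed graph, so the deterministic algorithm makes the identical sequence of choices and still follows $\pi$ on the trimmed graph. Since the trimmed graph has $\Theta(n)$ vertices, the algorithm's path has $\Omega(n^2)$ hops while the shortest path has $O(n)$ hops, giving a competitive ratio of $\Omega(n)$ with respect to hop distance. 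Because the argument never bounds the amount of memory used beyond its being $O(1)$, and the indistinguishability only relies on locality, the bound holds regardless of memory.

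The step I expect to be the main obstacle is making precise that the algorithm is genuinely \emph{forced} to spend $\Omega(n^2)$ hops on the initial graph --- i.e. that no clever choice of which neighbour to follow lets it reach $t$ in $o(n^2)$ steps. In the original Euclidean-length bound this is handled by geometry (short Euclidean detours cost little), but here I need a combinatorial obstruction: the construction must guarantee that every simple $s$--$t$ path in the big graph that stays inside the ``explored-from-$s$'' region has $\Omega(n^2)$ edges, while a genuinely shorter path exists only through a region a local algorithm cannot detect from $s$. Concretely I would engineer each row so that its only downward connections to the next row are at one extreme end, forcing a full sideways sweep per row; verifying that the $1$-local view never reveals a faster option, and that the trimming preserves all these local views (in particular that inserting the $O(n)$-hop shortcut does not alter any neighbourhood seen along $\pi$), is the delicate bookkeeping at the heart of the proof. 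I would also need to check the general-position and non-crossing-constraints conditions are met by the construction, which is routine but must be stated.
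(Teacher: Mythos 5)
Your high-level plan (big grid, record the deterministic path, trim, compare) matches the paper, but the quantitative accounting at its core does not work, and the step you yourself flag as ``the main obstacle'' is in fact both unnecessary and harmful. First, you cannot simultaneously have a recorded path $\pi$ of $\Omega(n^2)$ hops and a trimmed graph of $\Theta(n)$ vertices: your trimmed graph keeps the closed neighbourhood of every vertex $\pi$ visits, and a path forced to sweep every row visits $\Omega(n^2)$ distinct vertices, so the trimmed graph has $\Theta(n^2)$ vertices. With $N=\Theta(n^2)$ vertices in the final instance, a ratio of $\Omega(n^2)/O(n)$ is only $\Omega(\sqrt{N})$ in terms of the vertex count of the graph actually exhibited, which proves the $o(\sqrt{N})$ impossibility already known, not the claimed $\Omega(N)$ bound. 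The paper avoids this by truncating: it takes $\pi$ to be only the \emph{first $n/2$ steps} of the routing path (well defined because the shortest path in the $n$-row grid already needs $n/2$ steps), so the trimmed graph has $\Theta(n)$ vertices, and the ratio is $(n/2)$ versus $O(1)$ --- not $n^2$ versus $n$. Second, engineering the grid so that ``each row's only downward connection is at one extreme end'' forces \emph{every} $s$--$t$ path, including any would-be shortcut, to have $\Omega(n^2)$ hops; you would then have nothing short to compare against without inserting new structure, and splicing in a new $O(n)$-hop path risks changing the local views along $\pi$, breaking the indistinguishability argument.

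The missing idea is the counting argument that makes the short path appear \emph{for free} from the trimming alone. Each vertex of $\pi$ touches only a constant number of columns; since $\pi$ has $n/2$ vertices and there are $n$ columns, some column $c$ is touched only $O(1)$ times. After trimming, column $c$ retains only $O(1)$ vertices, and because all the intermediate grid vertices of that column have been deleted, the surviving ones are joined by visibility edges in $O(1)$ hops (plus one edge from $s$ into the column and one to $t$). No new vertices or edges are added near $\pi$, so the local views are preserved by deletion only, and the deterministic algorithm still walks all $n/2$ recorded steps while an $O(1)$-hop path exists. That yields the $\Omega(n)$ ratio on a graph of $\Theta(n)$ vertices. (One further detail the paper handles and you omit: the trimmed visibility graph could have quadratically many edges among the deleted region's survivors, which is fixed by adding a plane set of constraints away from column $c$.)
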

\begin{proof}
We need to show that the large and the trimmed graphs can both be realized as a visibility graph (instead of the constrained $\Theta_6$-graph). The large graph consists of $n\times n$ vertices in the unit grid in which the vertices in every other column are shifted half a unit in the $x$-coordinate, and then we scale the instance in the $x$-coordinate by a factor of $n$ (see exact details in~\cite{BFRV2017RoutingJournal}). Since the graph is a maximal plane graph (i.e., any additional edge would create a crossing), we can realize it as a visibility graph by making every edge a constraint. 

Now we explain how to also make the trimmed graph. Consider any deterministic 1-local routing algorithm and the path it follows from $s$ to $t$. Let $\pi$ be the path consisting of the first $n/2$ steps of this routing path. We note that since the initial graph has $n$ rows, the shortest path must have at least $n/2$ steps (by following edges on the left or right boundary you can skip one every two rows) and thus $\pi$ is well defined. 
  
  Next, we modify the initial graph in exactly the same way as done by Bose~\etal: for every vertex visited by $\pi$, we mark that vertex, and all of its neighbours. We also mark $t$, its neighbors, and all constraints whose two endpoints are marked vertices. The trimmed graph consists of the visibility graph consisting of marked vertices and constraints only. That is, we remove any non-marked vertex, edge, and constraint, and ``update'' the visibility graph (see Figure~\ref{fig:RoutingLowerBoundGraph}). 
  
\begin{figure}[h]
  \begin{center}
    \includegraphics{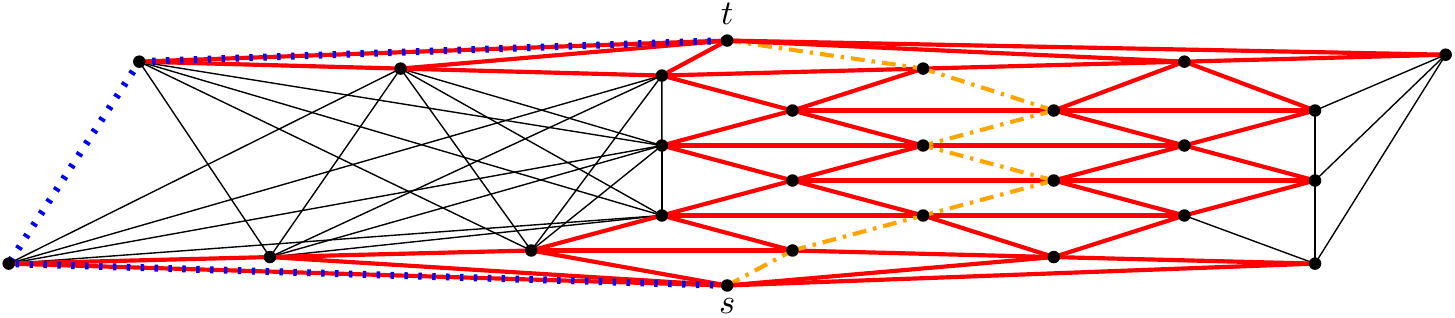}
  \end{center}
  \caption{Once we know the path $\pi$ (orange and dash-dotted) followed by the routing strategy we can create the trimmed graph: this graph will have the same local information as in the initial graph, hence the same choices will be made. However, the trimmed graph contains a shorter path with a constant number of steps (blue and dotted). Constraints are shown in thick red. Note that in the left side of the graph there could be quadratically many edges (in solid black), but this can be reduced to a linear number by choosing any plane graph and adding those edges as constraints.}
  \label{fig:RoutingLowerBoundGraph}
\end{figure}
  
    Now we analyze the performance of the routing algorithm in the trimmed graph. Consider the $n$ columns of the construction. We say that a vertex \emph{touches} a column if it lies in that column or has a neighbour in that column. Observe that every vertex of $\pi$ (other than $s$ and $t$) touches a constant number of columns. Since $\pi$ consists of $n/2$ steps, there exists a column $c$ that is touched at most $O((n/2)/n) = O(1)$ times. 
  
  Regardless of $\pi$, we observe that there is a path from $s$ to $t$ that consists of $O(1)$ steps: follow the edge from $s$ to the column $c$, and go upwards in that column until you reach the top row, and follow the edge to $t$. There are only $O(1)$ vertices in that column (i.e., the touched vertices). Between any two consecutive vertices we need a constant number of steps. Since any deterministic 1-local routing algorithm would follow a path consisting of at least $n/2$ steps, we get the claimed lower bound on the competitive ratio with respect to the number of steps. 
  
  In order to complete the proof, we need to argue that the trimmed graph has linear size. Since $\pi$ consists of $n/2$ steps and every vertex other than $s$ and $t$ has a constant number of neighbors, we obtain a graph with $\Theta(n)$ vertices. However, note that the resulting graph could have quadratically many edges (see the left side of Figure~\ref{fig:RoutingLowerBoundGraph} for example). We can reduce the number of edges to linear by adding additional constraints. The only requirement is that these constraints do not cross the edges used in the short path in column $c$. For example, we can add an arbitrary plane graph where every edge is a constraint. 
\end{proof}

The same construction can be used to show a lower bound on the competitiveness in terms of the length of the shortest path.

\begin{lemma}
  \label{lem:sqrtnLowerBound}
  No deterministic 1-local routing algorithm is $o(\sqrt{n})$-competitive with respect to the length of the shortest path on all pairs of vertices of the visibility graph on $n$ vertices, regardless of the amount of memory it is allowed to use. 
\end{lemma}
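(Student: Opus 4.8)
The plan is to re-use the construction and the visibility-graph realization from the proof of Lemma~\ref{lem:LinearStepsLowerBound} (itself an adaptation of the construction of Bose~\etal~\cite{BFRV2017RoutingJournal}), and to replace the hop-count accounting by an accounting of Euclidean length. Recall the ingredients: a grid-shaped instance of quadratic size with $s$ at the bottom and $t$ at the top, stretched by a factor of $n$ in the $x$-coordinate and realized as a visibility graph by turning every edge into a constraint; we run the given deterministic $1$-local routing algorithm, record the (linear-length) prefix $\pi$ of the path it produces, trim the instance down to the $\Theta(n)$ vertices and constraints that $\pi$ and $t$ interact with, and insert into the trimmed instance a cheap $s$--$t$ path together with whatever additional constraints are needed to bring the edge count down to linear without crossing that cheap path. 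Since the algorithm is deterministic and is presented with exactly the same local neighbourhoods in the large and in the trimmed instance, it follows the same prefix $\pi$ in both.

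First I would observe that this realization does not interfere with the length argument of Bose~\etal: it only ever adds constraints, never edges, and never changes any local view seen by the algorithm along $\pi$ nor any edge of the cheap inserted path, so their analysis applies essentially verbatim. Then I would carry out that analysis: the horizontal stretch by a factor of $n$ makes every sideways move in the grid cost $\Theta(n)$, and the obstacles force the routing algorithm to circumvent enough of them that its path is longer than the cheap inserted path by a factor of $\Omega(\sqrt n)$. Combined with the fact that the trimmed instance is a visibility graph on $\Theta(n)$ vertices, this yields the claimed bound.

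The only step needing genuine care — and it is exactly the step already handled in the proof of Lemma~\ref{lem:LinearStepsLowerBound} — is making the two edits to the trimmed instance compatible: the cheap $s$--$t$ path that certifies a small optimum, and the extra constraints used to keep the number of edges linear, must be chosen simultaneously so that the constraints do not block the cheap path while the potentially quadratic edge set of the trimmed visibility graph is nevertheless cut down to linear size (for instance by laying the extra constraints along an arbitrary plane graph disjoint from the cheap path, as in the previous proof). The remaining ingredients — the grid gadget that forces the detour and hence the $\Theta(\sqrt n)$ factor, and the indistinguishability of the large and trimmed instances from the algorithm's viewpoint — are inherited without change from the hop-count lemma and from the original construction in~\cite{BFRV2017RoutingJournal}.
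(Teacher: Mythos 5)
Your overall plan coincides with the paper's: reuse the large/trimmed construction of Lemma~\ref{lem:LinearStepsLowerBound} verbatim and only change the accounting from hop count to Euclidean length. The realization issues you flag (constraints only, indistinguishable local views, extra constraints must not block the cheap path) are exactly the ones the paper handles, and you handle them correctly.

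The gap is in the one step you wave at: you never derive the $\sqrt{n}$, and the informal mechanism you offer for it (``the obstacles force the routing algorithm to circumvent enough of them'') is not where the factor comes from. The actual accounting has two halves. For the routing path, each of the first $n/2$ steps is a non-vertical edge of length at least $n$ after the horizontal stretch (and if the path ever uses a vertical edge, those only occur at horizontal distance $n^2/2$ from $s$), so the routing path has length $\Omega(n^2)$. For the cheap path, you \emph{cannot} simply pick any column touched $O(1)$ times as in the hop-count lemma: reaching a far column already costs up to $n^2/2$ horizontally. You must restrict attention to the $2\sqrt{n}$ columns within horizontal distance $n\sqrt{n}$ of $s$; by pigeonhole one of these is touched only $O(\sqrt{n})$ times by $\pi$, and each touch forces a detour of cost $O(n)$, so the cheap path costs $O(n\sqrt{n})$ --- not $O(n)$. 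The ratio $\Omega(n^2)/O(n\sqrt{n})=\Omega(\sqrt{n})$ is precisely the balance between the cost of reaching a sparsely-touched column and the cost of the detours inside it; this trade-off is the content of the lemma (and explains why the bound is $\sqrt{n}$ rather than $n$), so it needs to appear explicitly rather than being delegated wholesale to the cited construction.
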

\begin{proof}
  The construction of both the large and the trimmed graphs is the same as in Lemma~\ref{lem:LinearStepsLowerBound}. The main change is that now the analysis is based on Euclidean length instead of number of steps. Recall that we scaled the instance by a factor of $n$ in the $x$-coordinates. This in particular implies that any non-vertical edge of the graph has length at least $n$. Consider the path $\pi$ consisting of the first $n/2$ steps taken by the routing algorithm. If none of these edges is vertical, we obtain a lower bound of $n^2/2$ on the length of the routing path. If at least one edge is vertical, we observe that both of these vertical sides are at distance $n^2/2$ from $s$, thus giving the same lower bound on the length of the path. 
  
  However, in the trimmed path we can find a shorter path. Consider the $2 \sqrt{n}$ columns\footnote{Since we need to consider the Euclidean distance traveled to reach these columns later, we cannot consider all $n$ columns.} at distance at most $n \sqrt{n}$ from $s$. Since $\pi$ consists of $n/2$ vertices, there exists a column $c$ that is touched at most $\sqrt{n}$ times. 
  
  For ease of exposition, first consider the case in which $c$ is touched only by $s$ and $t$. This would give us a path of length at most $O(n \sqrt{n})$: go from $s$ to the column $c$ and follow the upward edges to $t$. Each of the horizontal steps has length $O(n \sqrt{n})$ and there are two of them in total, whereas the $n$ upward edges have unit length. 
  
  In the general case, recall that $c$ is touched at most $\sqrt{n}$ times. For each vertex that is touched, we need to make a small detour via the next column, adding an extra cost of $O(n)$ per detour, leading to a total path length of $O(n \sqrt{n})$. Hence, the ratio between the two path lengths tends to $\omega(\sqrt n)$, giving the lower bound. 
\end{proof}

Next, we show that this lower bound can be improved in some cases. A common strategy for routing is to use the segment $st$ as a guide to reach the destination. Thus, even though one is allowed to use all edges of the graph, routing algorithms often only consider edges of the subgraph induced by the endpoints of edges that cross $st$. For any such algorithm we can increase the lower bound to show that no $o(n)$-competitive algorithm exists in terms of the shortest path distance.

\begin{lemma}
\label{lem:LinearLowerBound}
  No deterministic 1-local routing algorithm, that considers only edges of the subgraph induced by the endpoints of edges that cross $st$, is $o(n)$-competitive with respect to the length of the shortest path on all pairs of vertices of the visibility graph on $n$ vertices, regardless of the amount of memory it is allowed to use. 
\end{lemma}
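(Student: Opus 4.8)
The plan is to exhibit, directly, a visibility graph on $\Theta(n)$ points for which the subgraph the algorithm restricts itself to --- call it $G'$, the one induced by the endpoints of the edges that cross $st$ --- admits only a single, quadratically long $s$--$t$ route, while the full visibility graph contains a route of length $O(n)$ that $G'$ misses entirely. This is exactly where one gains over Lemma~\ref{lem:sqrtnLowerBound}: there the factor $\sqrt n$ sat in the \emph{short} path, which had to be pushed $\Theta(\sqrt n)$ columns away from $s$ to avoid the (a priori unknown) wandering of the routing path and paid an $O(n)$ detour at each of the up to $\sqrt n$ vertices it met; once the algorithm is pinned to $G'$, the short path can be hidden from it completely and hence shortened to $O(n)$, turning $\sqrt n$ into $n$.

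Concretely, I would place $s$ and $t$ so that $st$ is essentially vertical and arrange the only edges crossing $st$ to be those of a ``staircase'': a path $s=P_0,P_1,\dots,P_k,t$ with $k=\Theta(n)$, each edge $P_iP_{i+1}$ of length $\Theta(n)$ zigzagging across the line through $st$ and climbing only $O(1)$, so that the staircase fits inside an $O(n)\times O(n)$ box --- whence $|st|=\Theta(n)$, i.e.\ the optimum is $\Omega(n)$ --- while its total length is $\Theta(n^2)$. Adding constraints that do not cross $st$ and that block every would-be chord $P_iP_j$ makes the staircase the unique $s$--$t$ route inside $G'$, so any correct routing algorithm obeying the restriction must follow it (or fail to reach $t$, hence not be competitive at all) and incurs Euclidean length $\Theta(n^2)$. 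Note that, unlike in Lemmas~\ref{lem:LinearStepsLowerBound} and~\ref{lem:sqrtnLowerBound}, no ``run the algorithm on a large graph, then trim'' step is needed: the restriction already pins the algorithm to the linear-size $G'$.

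I would then insert the short path: a path leaving $s$ on, say, the left of the line through $st$, running up along the outside of the staircase's bounding box, and returning to $t$, so that each of its edges lies in the closed left half-plane and meets the segment $st$ only at $s$ and $t$. Hence no edge of this path properly crosses $st$, none of its internal vertices is an endpoint of an $st$-crossing edge, so these vertices are not in $G'$ and the restricted algorithm never considers any of these edges; in particular adding the path changes neither $G'$ nor the algorithm's behaviour, while giving an $s$--$t$ route of length $O(n)$. Finally, turning the remaining auxiliary edges into constraints and, as in Lemma~\ref{lem:LinearStepsLowerBound}, padding with an arbitrary plane constraint graph on the left side that avoids the inserted path keeps the whole arrangement a legal visibility graph on $\Theta(n)$ points with $O(n)$ edges. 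Since the optimum has length $O(n)$ and the algorithm's path has length $\Omega(n^2)$, its competitive ratio is $\Omega(n)$, and letting $n\to\infty$ proves the statement.

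The main obstacle is making the three requirements compatible at once: (i) \emph{only} the staircase edges may cross $st$, so that $G'$ is exactly the staircase together with $s$ and $t$ and the hidden left-side path is genuinely outside $G'$; (ii) \emph{inside} $G'$ the staircase must be the unique route, which forces the blocking constraints to kill every chord $P_iP_j$ without themselves crossing $st$ or spoiling general position; and (iii) the union of staircase, hidden path, and all constraints must be a non-crossing constraint family whose visibility graph has only linearly many vertices and edges. The remaining care --- the convention that $s$ and $t$, which can never be endpoints of an $st$-crossing edge, are treated as available to the algorithm, and the bookkeeping of exactly which vertices land in $G'$ --- is routine once (i)--(iii) are set up.
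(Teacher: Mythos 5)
Your high-level plan is the same as the paper's: a single explicit construction (no run-then-trim step), in which the subgraph $G'$ induced by endpoints of $st$-crossing edges contains only a long zig-zag of total length $\Theta(n)\cdot\mathrm{OPT}$, while a short $s$--$t$ path is hidden on one side of $st$ where the restricted algorithm cannot see it. The paper realizes this with three columns of $n/3$ grid points, horizontal constraints between consecutive columns in every row, diagonal constraints out of the odd rows, a half-unit shift of the odd rows, and a horizontal stretch by a large constant $\rho$; the forced zig-zag then has length about $\rho n/3$ against a left-column path of length about $4\rho+n/3$, giving ratio $n/12$ as $\rho\to\infty$. Your parameters (staircase edges of length $\Theta(n)$, ratio $n^2$ versus $n$) are a legitimate variant of the same idea.

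However, there is a genuine gap: the construction satisfying your conditions (i)--(iii) is asserted rather than exhibited, and it is precisely here that the work lies. Most concretely, your inference ``each edge of the short path lies in the closed left half-plane, hence none of its internal vertices is an endpoint of an $st$-crossing edge'' is a non sequitur. Membership in $G'$ is determined by \emph{all} visibility edges incident to a vertex, not just the edges of the path you intend it to carry: an internal vertex $q$ of the hidden path will generically see some staircase vertex $P_i$ on the far side of $st$, which puts $q$ into $G'$ and hands the restricted algorithm a shortcut. The same issue arises for the endpoints of the blocking constraints you add to kill the chords $P_iP_j$: each such endpoint is a new vertex whose stray visibility edges may cross $st$ and enlarge $G'$ unpredictably. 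To close the argument you must choose the point set and constraints so that \emph{every} visibility edge of the final graph can be enumerated and the crossing ones are exactly the staircase edges; the paper's dense horizontal constraints, which reduce each vertex's visibility to a constant-size local neighbourhood, are doing exactly this job, and some such mechanism is indispensable in your version as well.
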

\begin{proof}
In order to obtain this lower bound, we modify the lower bound for routing on the constrained $\Theta_6$-graph, presented by the authors~\cite{BKRV2017Routing} in Lemma~4.1. The proof of this claim only needs one graph (shown in Figure~\ref{fig:LowerBound}). 

\begin{figure}[th]
  \begin{center}
    \includegraphics{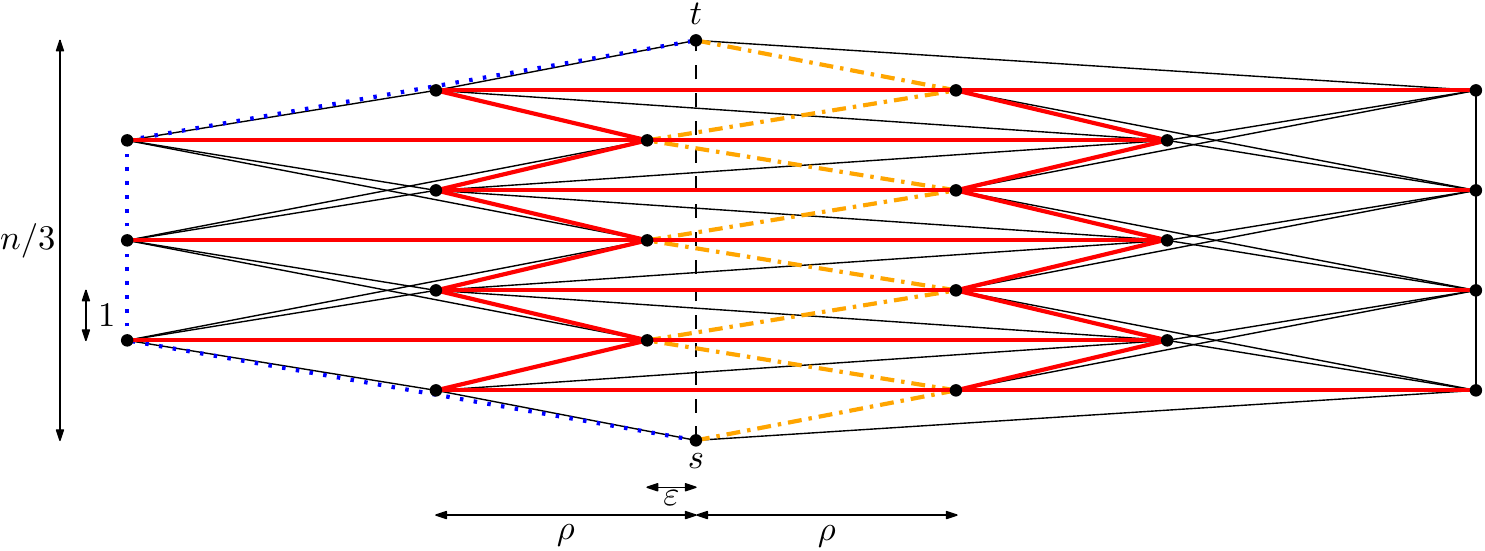}
  \end{center}
  \caption{Lower bound construction: the shortest path in the subgraph induced by all endpoints of edges crossing $st$ (orange and dash-dotted) is about $n/12$ times as long as the shortest path in the visibility graph (blue and dotted). Constraints are shown in thick red and the remaining edges are shown in solid black.}
  \label{fig:LowerBound}
\end{figure}

The construction is as follows (see Figure~\ref{fig:RoutingLowerBoundConstruction}): start with $3$ columns of $n/3$ vertices each, aligned on a grid\footnote{For simplicity, we assumed that $n$ is a multiple of 3. This assumption can be removed by placing the 1 or 2 remaining points far enough away from the remainder of the point set.}. We add a constraint between every horizontal pair of vertices of two consecutive columns. We also add constraints from every vertex that is in an odd row of the first two columns to the vertex in the next column that is in either the next or previous row. Next, we shift every odd row by $1/2+\varepsilon$ units to the right (for some arbitrarily small but positive $\varepsilon<1/2$). We add a vertex $s$ below the lowest row and a vertex $t$ above the highest row, centered the first two vertices on said row. Finally, we stretch the point set by a factor $2\rho$ in the horizontal direction, for some large constant $\rho$. When we construct the visibility graph on this point set, we get the graph shown in Figure~\ref{fig:LowerBound}.

\begin{figure}[h]
  \begin{center}
    \includegraphics{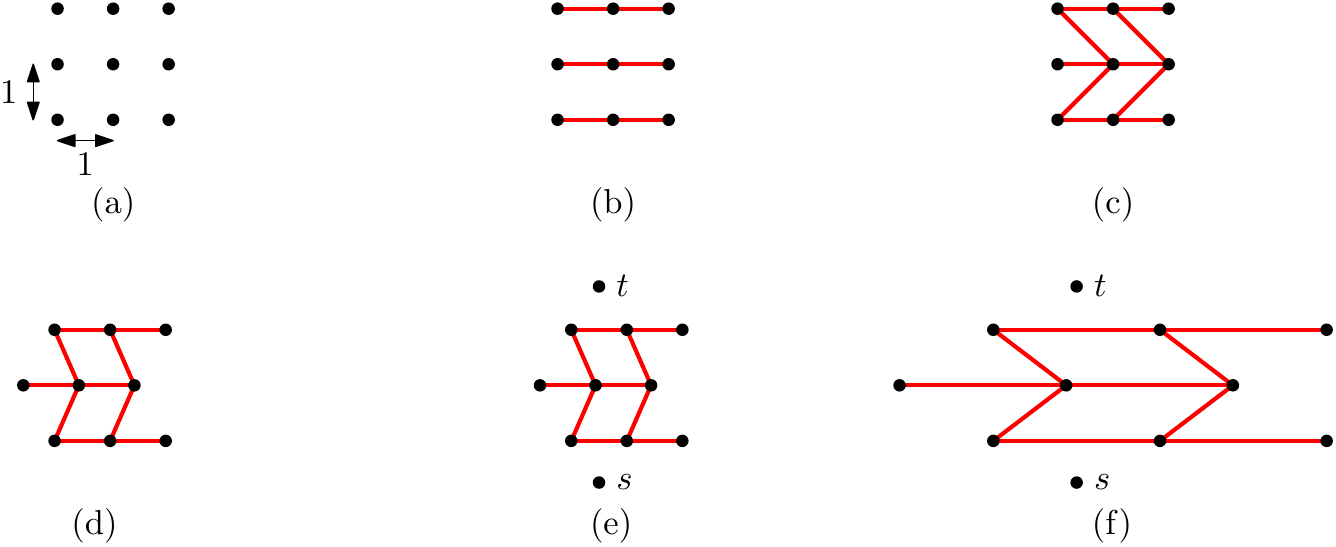}
  \end{center}
  \caption{Constructing the lower bound: (a) the initial point set, (b) adding the horizontal constraints, (c) adding the constraints between rows, (d) shifting the rows, (e) adding $s$ and $t$, (f) stretching the construction.}
  \label{fig:RoutingLowerBoundConstruction}
\end{figure}

The shortest path in the visibility graph goes from $s$ to the leftmost column in one step, travels vertically upwards, and goes to $t$ in one step. Ignoring the terms that depend on $\varepsilon$, the first and last step have length less than $2\rho+2$ whereas the vertical steps each have unit length, giving a path whose total length is less than $4\rho+n/3+4$. 

However, if we restrict ourselves to the subgraph induced by the endpoints of edges that cross $st$ the path becomes significantly longer: we must now zig-zag left and right in a path of $n/3$ steps, each time crossing the segment $st$ (see Figure~\ref{fig:LowerBound}). Ignoring the terms that depend on $\varepsilon$, each edge of this path has length at least $\rho$, giving an overall lower bound of $\rho \cdot n/3$ for the length of any restricted path. Hence, the ratio between the two bounds approaches $n/12$, since $\lim_{\rho \rightarrow \infty} \frac{\rho \cdot \frac{n}{3}}{4 \rho + \frac{n}{3} + 4} = \frac{n}{12} \,.$ 
  
  Since the shortest path in the subgraph is $n/12$ times the length of the shortest path in the visibility graph, no routing algorithm that considers only the subgraph can be $o(n)$-competitive with respect to the length of the shortest path in the visibility graph. 
\end{proof}

\section{Conclusion}
We presented the first deterministic 1-local $O(1)$-memory routing algorithms for the visibility graph that does not require the computation of a planar subgraph. Unfortunately, our algorithms do not give any guarantees on the length of the routing path, only on the number of edges used. A natural improvement would be the design of a routing strategy that is competitive with respect to the length of the shortest path. 

Our lower bounds show that $o(\sqrt{n})$-competitiveness is not possible (and that it will be even hard to obtain $o(n)$-competitiveness). The same lower bounds also give rise to the following questions: Can we design an $O(\sqrt{n})$-competitive deterministic 1-local routing strategy? Can we actually beat these lower bounds by introducing randomness into the routing algorithms?  

\subsection*{Acknowledgements}
Part of this work was performed at the Sendai Workshop on Discrete and Computational Geometry and the Shonan Meeting 106 - Geometric Graphs: Theory and Applications. We thank the participants of both workshops for providing a fun and stimulating research environment. 
\bibliography{references}

\begin{thebibliography}{10}

\bibitem{BCMRS16}
Prosenjit Bose, Jean{-}Lou~De Carufel, Pat Morin, Andr{\'{e}} van Renssen, and
  Sander Verdonschot.
\newblock Towards tight bounds on theta-graphs: More is not always better.
\newblock {\em Theoretical Computer Science}, 616:70--93, 2016.

\bibitem{BFRV12Constrained}
Prosenjit Bose, Rolf Fagerberg, Andr\'e van Renssen, and Sander Verdonschot.
\newblock On plane constrained bounded-degree spanners.
\newblock In {\em Proceedings of the 10th Latin American Symposium on
  Theoretical Informatics (LATIN 2012)}, volume 7256 of {\em Lecture Notes in
  Computer Science}, pages 85--96, 2012.

\bibitem{BFRV2017RoutingJournal}
Prosenjit Bose, Rolf Fagerberg, Andr\'e van Renssen, and Sander Verdonschot.
\newblock Competitive local routing with constraints.
\newblock {\em Journal of Computational Geometry (JoCG)}, 8(1):125--152, 2017.

\bibitem{BK06}
Prosenjit Bose and J.~Mark Keil.
\newblock On the stretch factor of the constrained {D}elaunay triangulation.
\newblock In {\em Proceedings of the 3rd International Symposium on Voronoi
  Diagrams in Science and Engineering (ISVD 2006)}, pages 25--31, 2006.

\bibitem{BKRV2017Routing}
Prosenjit Bose, Matias Korman, Andr\'e van Renssen, and Sander Verdonschot.
\newblock Constrained routing between non-visible vertices.
\newblock In {\em Proceedings of the 23rd Annual International Computing and
  Combinatorics Conference (COCOON 2017)}, volume 10392 of {\em Lecture Notes
  in Computer Science}, pages 62--74, 2017.

\bibitem{BKRV2017RoutingVisibilityGraph}
Prosenjit Bose, Matias Korman, Andr\'e van Renssen, and Sander Verdonschot.
\newblock Routing on the visibility graph.
\newblock In {\em Proceedings of the 28th International Symposium on Algorithms
  and Computation (ISAAC 2017)}, volume~92 of {\em Leibniz International
  Proceedings in Informatics}, pages 18:1--18:12, 2017.

\bibitem{BR14}
Prosenjit Bose and Andr\'e van Renssen.
\newblock Upper bounds on the spanning ratio of constrained theta-graphs.
\newblock In {\em Proceedings of the 11th Latin American Symposium on
  Theoretical Informatics (LATIN 2014)}, volume 8392 of {\em Lecture Notes in
  Computer Science}, pages 108--119, 2014.

\bibitem{C87}
Ken Clarkson.
\newblock Approximation algorithms for shortest path motion planning.
\newblock In {\em Proceedings of the 19th Annual ACM Symposium on Theory of
  Computing (STOC 1987)}, pages 56--65, 1987.

\bibitem{D97}
Gautam Das.
\newblock The visibility graph contains a bounded-degree spanner.
\newblock In {\em Proceedings of the 9th Canadian Conference on Computational
  Geometry (CCCG 1997)}, pages 70--75, 1997.

\bibitem{KG92}
J.~Mark Keil and Carl~A. Gutwin.
\newblock Classes of graphs which approximate the complete {E}uclidean graph.
\newblock {\em Discrete {\&} Computational Geometry}, 7(1):13--28, 1992.

\bibitem{G09}
Sudip Misra, Subhas~Chandra Misra, and Isaac Woungang.
\newblock {\em Guide to Wireless Sensor Networks}.
\newblock Springer, 2009.

\bibitem{R09}
Harald R{\"a}cke.
\newblock Survey on oblivious routing strategies.
\newblock In {\em Mathematical {T}heory and {C}omputational {P}ractice}, volume
  5635 of {\em Lecture Notes in Computer Science}, pages 419--429, 2009.

\end{thebibliography}
\end{document}